\newtheorem{proof}{Proof}
\newtheorem{theorem}{Theorem}
\begin{document}
%
\title{Equi-depth Histogram Construction for Big Data with Quality Guarantees}
%
%
%
%

\author{Burak~Y{\i}ld{\i}z,
        Tolga~B\"{u}y\"{u}ktan{\i}r,
        and~Fatih~Emekci
\IEEEcompsocitemizethanks{\IEEEcompsocthanksitem The authors are with the Department
of Computer Engineering, Turgut Ozal University, Ankara,
Turkey. E-mail: \{yildizb, tbuyuktanir,\protect\\femekci\}@turgutozal.edu.tr}
\thanks{}}

%
%

\markboth{Equi-depth Histogram Construction for Big Data with Quality Guarantees}%
{}
%



\IEEEtitleabstractindextext{%
\begin{abstract}
The amount of data generated and stored in cloud systems has been increasing exponentially. The examples of data include user generated data, machine generated data as well as data crawled from the Internet. There have been several frameworks with proven efficiency to store and process the petabyte scale data such as Apache Hadoop, HDFS and several NoSQL frameworks. These systems have been widely used in industry and thus are subject to several research. The proposed data processing techniques should be compatible with the above frameworks in order to be practical. One of the key data operations is deriving equi-depth histograms as they are crucial in understanding the statistical properties of the underlying data with many applications including query optimization. In this paper, we focus on approximate equi-depth histogram construction for big data and propose a novel merge based histogram construction method with a histogram processing framework which constructs an equi-depth histogram for a given time interval. The proposed method constructs approximate equi-depth histograms by merging exact equi-depth histograms of partitioned data by guaranteeing a maximum error bound on the number of items in a bucket (bucket size) as well as any range on the histogram.

\end{abstract}

\begin{IEEEkeywords}
approximate histogram, merging histograms, big data, log files
\end{IEEEkeywords}}

\maketitle

\IEEEdisplaynontitleabstractindextext

%
\IEEEpeerreviewmaketitle

\IEEEraisesectionheading{\section{Introduction}\label{sec:intro}}
\IEEEPARstart{T}{he} data generated and stored by enterprises are in the orders of terabytes or even petabytes~\cite{logothetis2010stateful,thusoo2010data,thusoo2010hive}. We can classify the source of the data in the following groups: machine generated data (a.k.a logs), social media data, transactional data and data generated by medical and wearable devices. Processing the produced data and deriving results are critical in decision making and thus the most important competitive power for the data owner. Therefore, handling such big datasets in an efficient way is a clear need for many institutions. Hadoop MapReduce~\cite{hadoop,dean2010mapreduce} is a big data processing framework that has rapidly become the standard method to deal with data bombarding in both industry and academia~\cite{dittrich2010hadoopplusplus,gates2009building,jindal2011trojan,zaharia2008improving,isard2007dryad,thusoo2010hive}. The main reasons of such strong adoption are the ease-of-use, scalability, failover and open-source properties of Hadoop framework.  After the wide distribution, many research works (from industry and academia) have focused on improving the performance of Hadoop MapReduce jobs in many aspects such as different data layouts~\cite{jindal2011trojan,floratou2011column,lin2011llama}, join algorithms~\cite{afrati2010optimizing,blanas2010comparison,okcan2011processing}, high-level query languages~\cite{gates2009building,isard2007dryad,thusoo2010hive}, failover algorithms~\cite{quiane2011rafting}, query optimization techniques~\cite{wu2011query,babu2010towards,herodotou2011profiling,jahani2011automatic}, and indexing techniques~\cite{dittrich2010hadoopplusplus,jiang2010performance,dittrich2012only}. 

In today's fast-paced business environment, obtaining results quickly represents a key desideratum for \textit{Big Data Analytics} \cite{jindal2011trojan}. For most applications on large datasets, performing careful sampling and computing early results from such samples provide a fast and effective way to obtain approximate results within the predefined level of accuracy.  The need for approximation techniques grow with the size of the data sets and most of the time they shed a light to make fast decisions for the businesses. General methods and techniques for handling complex tasks have room to improve in both MapReduce systems and parallel databases.  For example, consider a web site, such as a search engine, consists of several web server hosts; user queries (requests) are collectively handled by these servers (using some scheduling protocol); and the overall performance of the web site is characterized by the latency (delay) encountered by the users. The distribution of the latency values is typically very skewed, and a common practice is to track some particular quantiles, for instance, the 95th percentile latency. In this context, one can ask the following questions.
\begin{itemize}
\item What is the 95th percentile latency of a single web server for the last month?
\item What is the 95th percentile latency of the entire web site (over all the servers) for Christmas seasons?
\item How latency affected the user behavior for the holiday seasons?
\end{itemize}
The Yahoo website, for instance, handles more than 43 million hits per day~\cite{yahoo2016}, which translates to 40000 requests per second. The Wikipedia website handles 30000 requests per second at peak, with 350 web servers. While all three questions relate to computing of statistics over data, they have different technical nuances, and often require different algorithmic approaches as accuracy can be traded for performance.

One way to obtain statistical information from data is histogram construction. Histograms summarize the whole data and give information about distribution of the data. Moreover, the importance of histogram increases when the size of the data is huge. Since, histograms are very useful and are efficient ways to get quick information about data distribution, they are highly used in database systems for query optimization, query result estimation, approximate query answering, data mining, distribution fitting, and parallel data partitioning~\cite{ioannidis2003history}. One of the most used histogram types in database systems is the equi-depth histogram. The equi-depth histogram is constructed by finding boundaries that split the data into a predefined number of buckets containing equal number of tuples. More formally, $\beta$-bucket equi-depth histogram construction problem can be defined as follows: given a data set with $N$ tuples, find the boundary set $B = {b_{1}, b_{2}, \dots, b_{\beta-1}}$ that splits the sorted tuples into $\beta$ buckets, each of which has approximately $N/\beta$ tuples.

In this paper, we propose a framework to compute equi-depth histograms on-demand (dynamic) from the precomputed histograms of the partitioned data. In order to do so, we propose a histogram merging algorithm giving a user specified error bound on the bucket size. In particular, we merge $T$-bucket histograms to build a $\beta$-bucket histogram for the underling data of size $N$ and give a mathematical proof showing $2\beta/T$ error rate on the bucket size and as well as any range on the histogram. In our framework, users specify $T$ and $\beta$, we compute $T$-bucket histograms for each partition, and a query asking for a histogram of any subset of the partitions. Then, the framework computes the $\beta$ histogram on-demand from the offline computed histograms of the partitions. In real life systems, the precomputation is done incrementally (i.e., daily, hourly or monthly) such as logs and database transactions.

Our contribution can be summarized as follows:

\begin{itemize}
\item We proposed a novel algorithm to build an approximate equi-depth histogram for a union of partitions from the sub-histograms of the partitions.
\item We theoretically and experimentally showed that the error percentage of a bucket size is bounded by a predefined error set by the user (i.e., $\varepsilon_{max}$).
\item We theoretically and experimentally showed that the error percentage of a range size is bounded by a predefined error set by the user (i.e., the same above $\varepsilon_{max}$).
\item We implemented our algorithm on Hadoop and demonstrated how to apply it to practical real life problems.
\end{itemize}

The rest of the paper is organized as follows: In Section~\ref{sec:prob}, we introduce the histogram construction problem for big data. We give some background information about cloud computing environments in Section~\ref{sec:back}. In Section~\ref{sec:algo}, in-depth explanation of the proposed method takes place. The details of implementation on Hadoop MapReduce framework is given in Section~\ref{sec:impl}. In Section~\ref{sec:rel}, related works are summarized. Evaluation methodology and experimental results are discussed in Section~\ref{sec:exp} and finally we conclude the paper with Section~\ref{sec:conc}.

\section{Problem Definition}
\label{sec:prob}
\begin{figure*}[!htb]
\centering
\includegraphics[width=0.6\textwidth]{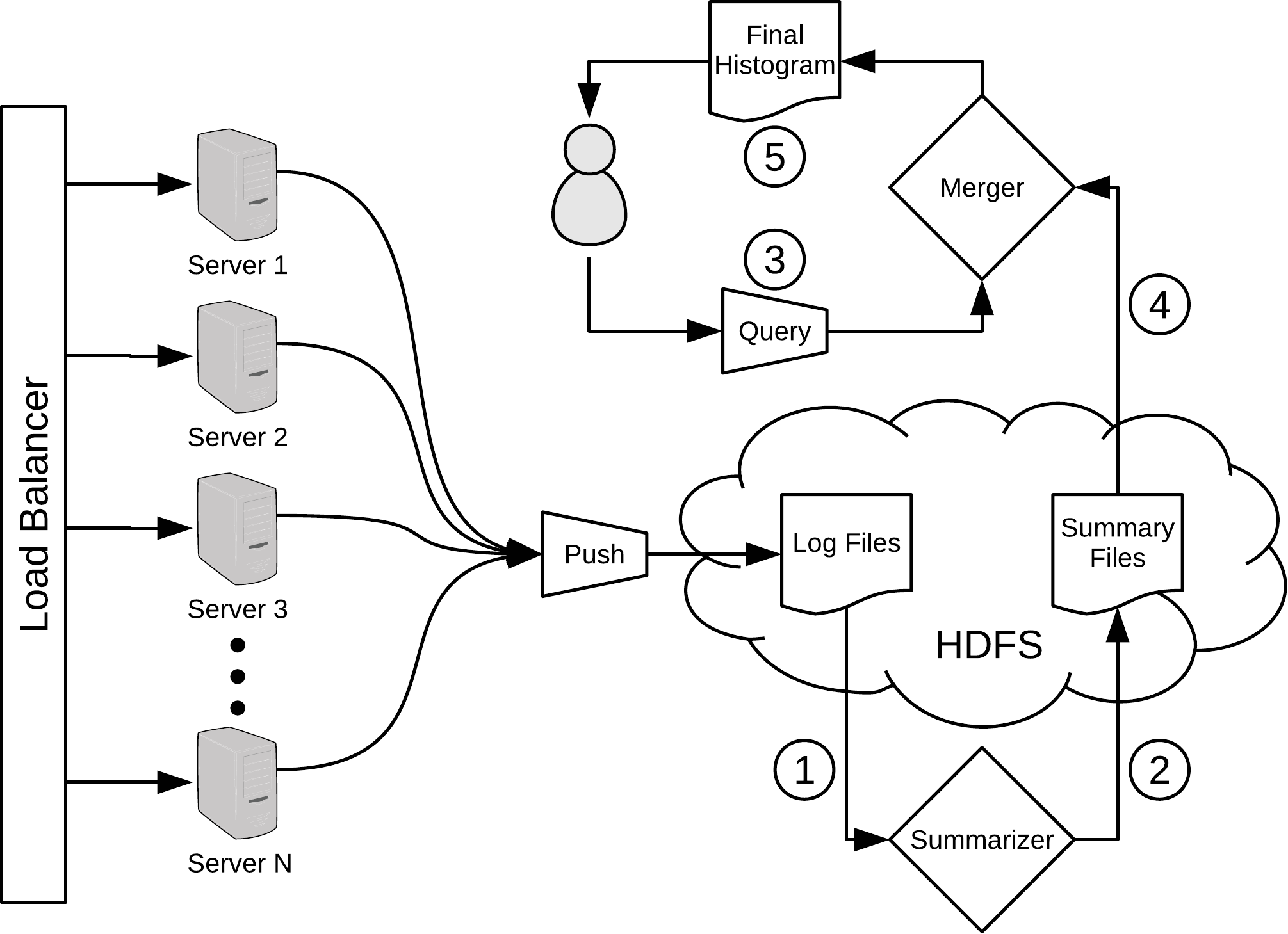}
\caption{Flowchart of the proposed method}
\label{fig:query}
\end{figure*}

\begin{figure}[!htb]
\centering
\includegraphics[width=0.5\textwidth]{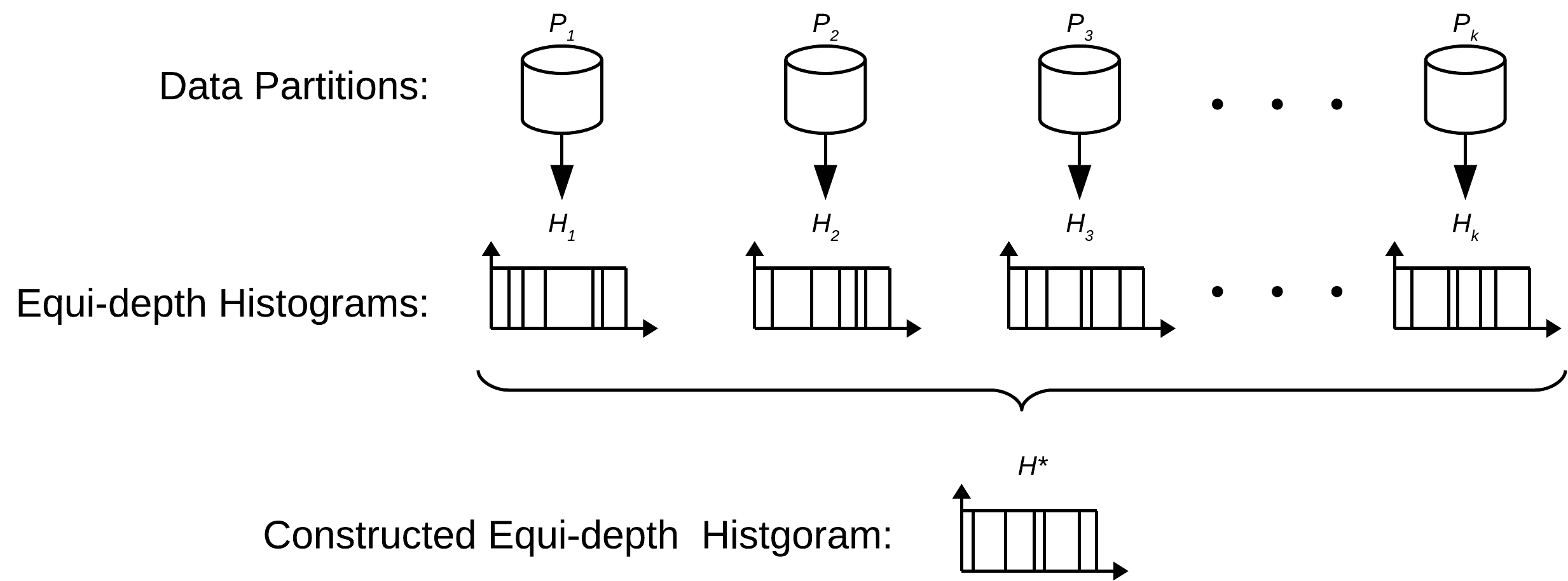}
\caption{Histogram building by merging exact histograms of data partitions}
\label{fig:merge}
\end{figure}

In this section, we motivate the problem with a practical example and then formally define it. Machine generated data, also known as logs, is automatically created by machines. Logs contain list of activities of machines. In general, logs are stored daily in W3C format~\cite{hallam1996extended}. When we consider a web server, requests from web applications and responses from the server are written to log files. There are several actors deriving intelligence from these logs. For example; operations engineers derive operational intelligence (response times, errors, exceptions etc.) and business analyst derives business intelligence (top requested pages, top users, click through rates etc.). In the context of web applications, the need to analyze clickstreams has increased rapidly, and in order to answer the demand, businesses build log management and analytical tools. A typical internet business may have thousands of web servers logging every activity on the site. In addition, they have ETL processes incrementally collecting, cleaning and storing the logs in a big data storage (i.e., This is usually Hadoop and its storage HDFS). This work-flow is demonstrated in Figure \ref{fig:query}. The amount of data to ETL and to run analytics on is huge and has been increasing rapidly. Most of the time, customers would be happy to trade accuracy for performance as they need a quick intelligence to make fast decisions.

One quick and reliable way to understand the statistics about the underlying data is using equi-depth histograms. In the paper, we outline a framework computing on-demand histograms of the data for any time interval for the above scenario. In web servers, daily logs are kept instantly. At the end of the day, all the log files belonging to that day are concatenated in a single log file and it is pushed to HDFS. As soon as the new log file is available in the HDFS, an exact equi-depth histogram is built and stored in the HDFS in a new summary file by the Summarizer Job. This means that the equi-depth histogram of each daily log is stored. Then, if a histogram for any time interval is requested (for example histogram for the last month), the Merger Job fetches an equi-depth histogram of each histogram and merges them using the proposed merging algorithm explained in the following sections. We also provide an error rate on the histogram in order to increase the confidence. Although we motivate our framework with logs, it can be applied without loss of generality to any structured data where we need a histogram such as database transactions, etc...

After motivating and showing the need, we can formulate the problem we are solving as follows: \\
\\
\textbf{\textit{Problem Definition:}} Given $k$ partitions, $P_1, P_2,..., P_k$, and their respective $T$-bucket equi-depth histograms, $H_1, H_2,...,H_k$, build a $\beta$-bucket equi-depth histogram $H^{*}$ where $\beta \leq T$ over $P_1, P_2,..., P_k$ where $ |P_1| + |P_2| + ... + |P_k|$ is equal to $N$ and $B_1, B_2,...,B_\beta$ are the buckets of $H^{*}$ such that:
\begin{itemize}
\item The size of any bucket $B_i$ is $(N/\beta) \pm \varepsilon_{max}$ where $\varepsilon_{max} < 2\beta/T \times (N/\beta)$. 
\item The size of any range spanning $m$ buckets $B_i$ through $B_j$ is $m \times (N/\beta) \pm \varepsilon_{max}$ where $\varepsilon_{max} < 2\beta/T \times (N/\beta)$. 
\end{itemize}

\section{Background}
\label{sec:back}
In this section, some background information is given about cloud computing environments such as Distributed File System (DFS)~\cite{ghemawat2003google},  MapReduce (MR)~\cite{dean2008mapreduce}, Pig~\cite{olston2008pig}, and Hive~\cite{thusoo2009hive}.

\begin{figure*}[!t]
\centering
\includegraphics[width=0.7\textwidth]{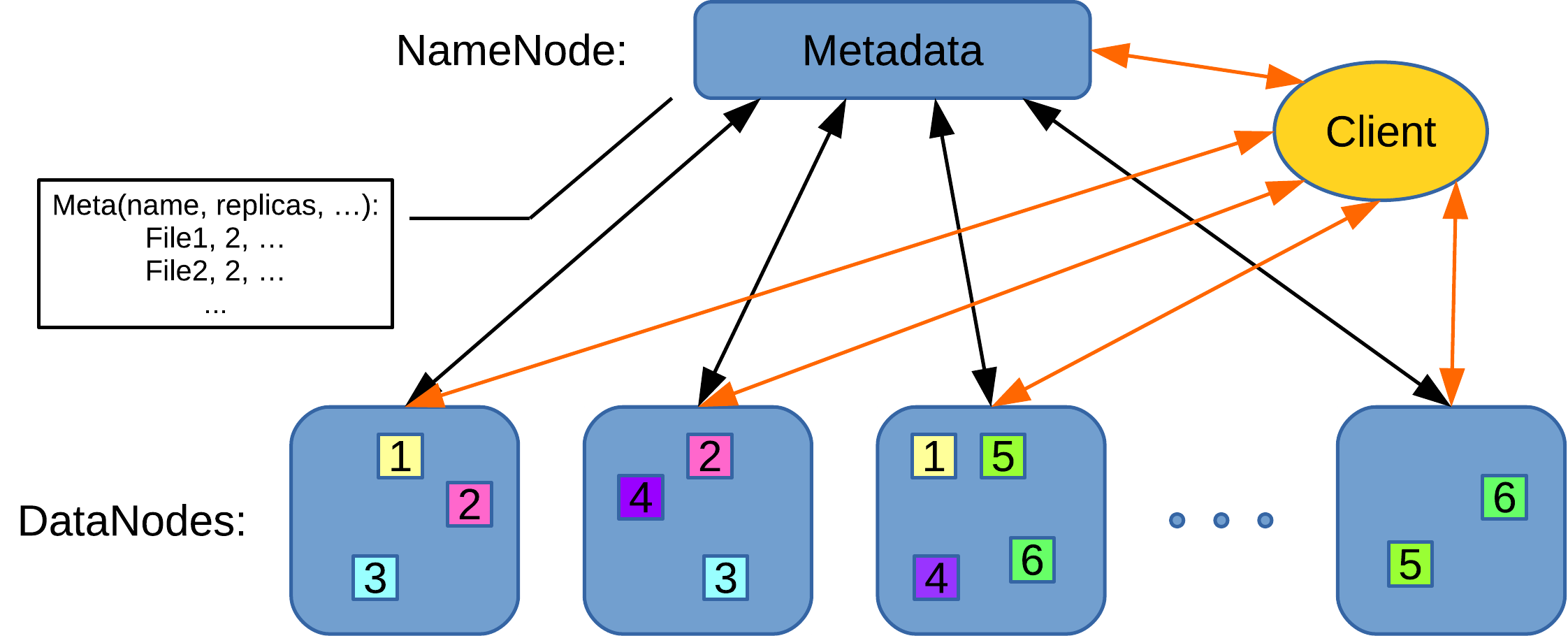}
\caption{General architecture of Hadoop Distributed File System}
\label{fig:dfs}
\end{figure*}

\textbf{\textit{Distributed File System (DFS):}} DFS is a generalized name of distributed, scalable, and fault-tolerant file systems such as Google FS~\cite{ghemawat2003google} and Hadoop DFS~\cite{borthakur2008hdfs}. In particular, we address the HDFS in this paper. In HDFS, large files are divided into small chunks and these small chunks are replicated and stored in multiple machines named DataNodes. The replication process ensures that HDFS is fault-tolerant. The metadata of the stored files such as name, replication count, file chunk locations, etc. are indexed in NameNode which is another machine. Clients read and write files to HDFS by interacting with the NameNode and the DataNodes.

The overall system architecture of HDFS is seen in Figure~\ref{fig:dfs}. In the figure, the NameNode takes place at the top and the DataNodes at the bottom. The replicas of the file chunks are labeled with the same numbers. The NameNode can interact with the DataNodes to maintain the file system by controlling the health and balancing the loads of the DataNodes. If there is a problem in a DataNode, the NameNode detects the problematic DataNode and replicates the file chunks in that DataNode to other DataNodes. Clients can also interact with the whole HDFS to read existing files and write new files to HDFS.

\begin{figure*}[!t]
\centering
\includegraphics[width=0.7\textwidth]{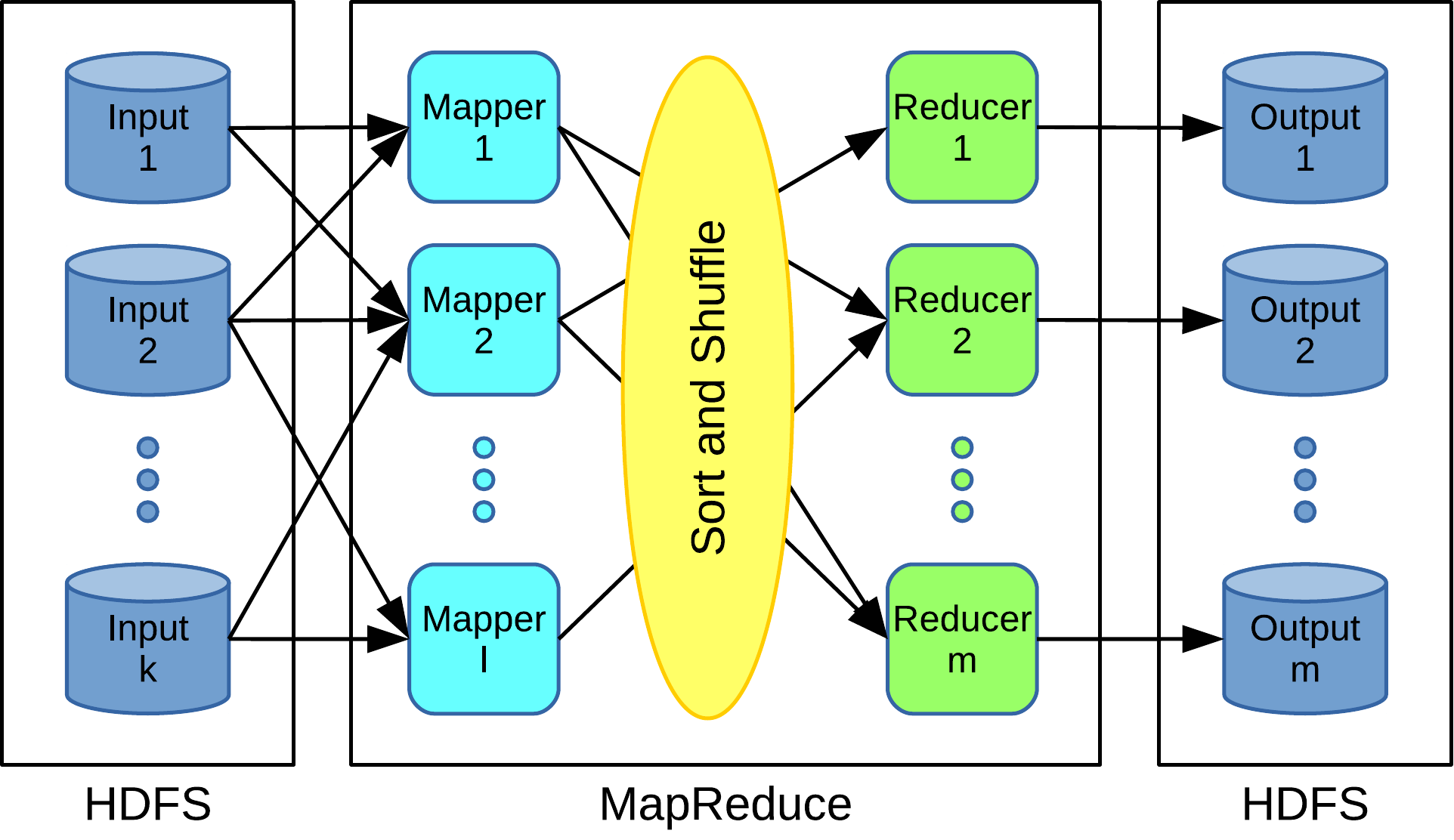}
\caption{Overview of Hadoop MapReduce Framework}
\label{fig:mr}
\end{figure*}

\textbf{\textit{MapReduce (MR):}} MapReduce is a programming model for processing huge datasets which is especially resided in distributed file systems. Besides, MapReduce framework is the combination of the components which executes submitted MapReduce tasks by managing all resources and communications among the cluster while providing for fault tolerance and redundancy. In this paper, we specifically handle the Hadoop MapReduce framework.

A MapReduce task consist of Mappers and Reducers. The Mapper has a method called Map which gets $\langle key, value \rangle$ pairs as input and emits $\langle key, value \rangle$ pairs as intermediate output. The intermediate output is shuffled and sorted by a component of the MapReduce framework at the Sort and Shuffle phase and all $\langle key, value \rangle$ pairs are grouped and sorted by keys at this phase. The output of the Sort and Shuffle phase is $\langle key, [value1, value2, ...] \rangle$ pairs and this is the input of the Reduce method which is in the Reducer. After the Reduce method finishes it's job, it also emits $\langle key, value \rangle$ pairs as final output of the MapReduce task. In some cases, a Combiner is also included in MapReduce tasks which is often the same with the Reducer. The Combiner has a Combine method which combines the output of the Map method to decrease network traffic.

The summary picture of the MapRecude framework is given in Figure~\ref{fig:mr}. In the figure, input to a Mapper is read from HDFS. The output of the Mappers goes through the Sort and Shuffle phase and Reducers get the sorted and shuffled data and process it and write the output to HDFS, again.

\textbf{\textit{Apache Pig:}} Pig is a platform for processing big data with query programs written in a procedural language called Pig Latin. Query programs are translated into MapRecude tasks and the tasks are run over MapReduce framework. The queries can be written by using both existing and user defined functions. Thus, Pig is an extensible platform and users can create their own functions.

\textbf{\textit{Apache Hive:}} Hive is another platform for storing and processing large datasets like Pig. Hive has its own SQL-like declarative querying language named as HiveQL. HiveQL also supports custom user defined Map/Reduce tasks in queries.

\section{Equi-depth Histogram Building}
\label{sec:algo}
In this section, we explain our approximate equi-depth histogram construction method in detail. In the first part of the method, exact-equi depth histograms of data partitions are constructed. This part is done offline with a well-know straight-forward histgoram construction algorithm. In the second and the important part of the method, equi-depth histograms are merged to construct an approximate equi-depth histogram over the partitions. One important feature is that the constructed histogram comes with maximum error bound on both size each bucket and size of any bucket range. 

In the following part of the section, merging part of the method is explained with an example and then the algorithm of the merging is given and the section is concluded with maximum error bound theorems and their proofs.

A $T$-bucket equi-depth histogram $H$ for a set of values $P$ (may be called a partition) can be described as an increasing sequence of numbers, which represents the boundaries. Each pair of consecutive boundaries defines a bucket, and the size of this bucket is the number of values between its boundaries, where inclusive at the front and exclusive at the end (except the last bucket). Last bucket size also includes the last boundary. For an exact equi-depth histogram, size of each bucket is the same and equals and exactly total number of values divided by total number of buckets. On the other hand, bucket sizes of an approximate equi-depth histograms may not be equal.

We express a $T$-bucket equi-depth histogram as $H=\{(b_1,s_1),(b_2,s_2), \dots , (b_i,s_i), \dots ,(b_{T-1},s_{T-1}),(b_T,0)\}$, where $b_i$ indicates the $i_{th}$ boundary and the $s_i$ indicates the $i^{th}$ bucket size for exact histograms (the approximate size of the $i^{th}$ bucket for approximate histograms), for the rest of the paper. Let us have two example value sets, $P_1$ and $P_2$, which are $\{2,4,5,6,7,10,13,16,18,20,21,25\}$ and $\{3,9,11,12,14,15,17,19,22,23,24,26,27,29,30\}$. According to the value sets, $|P_1|$ and $|P_2|$ which represent number of values in each set, equal to $12$ and $15$, respectively. $3$-bucket histogram of $P_1$ is $H_1=\{(2,4),(7,4),(18,4),(25,0)\}$ and $P_2$ is $H_2=\{(3,5),(15,5),(24,5),(30,0)\}$ and graphical representation of them are given in Figures~\ref{fig:hist1} and \ref{fig:hist2}. First bucket of $H_1$ contains the first four values, $\{2,4,5,6\}$, the second bucket contains four values, $\{7,10,13,16\}$, and the third (also the last) bucket contains the last four values $\{18,20,21,25\}$. For $H_2$, first bucket has five values, $\{3,9,11,12,14\}$, the second bucket contains five values, $\{15,17,19,22,23\}$, and the last bucket has five values, $\{24,26,27,29,30\}$.
Let us define a $s(i,H)$ function which denotes the size of $i^{th}$ bucket of the equi-depth histogram $H$, and a $S(i,H)$ function which denotes the cumulative size of all buckets from the first to $i^{th}$ bucket of $H$, that is,
\begin{equation}
\label{eqn:exactcum}
S(i,H) = s(1,H) + s(2,H) + \cdots + s(i,H)
\end{equation}
Then, the convention assures that $S(i,H) = i \times |P|/T$, for all $i \leq T$, where $|P|$ is the number of values and $T$ is the number of buckets. Considering $H_1$, $s(1,H_1)=s(2,H_1)=s(3,H_1)=4$. For cumulative sizes, $S(1,H_1)=4$, $S(2,H_1)=8$, and $S(3,H_1)=12$. Bucket sizes of $H_2$ is $s(1,H_2)=s(2,H_2)=s(3,H_2)=5$ and cumulative sizes are $S(1,H_2)=5$, $S(2,H_2)=5$, and $S(3,H_2)=5$.

Let us define two more functions, $a(i,H)$ and $A(i,H)$, which are the $i^{th}$ approximate bucket size and the $i^{th}$ cumulative bucket size for approximate equi-depth histograms, respectively. By writing an approximate version of Equation~\ref{eqn:exactcum}, we get the following equation:
\begin{equation}
\label{eqn:appxcum}
A(i,H) = a(1, H) + a(2,H) + \cdots + a(i,H)
\end{equation}
Lastly, let us define a range function $R(i,j,H)$ that gives the sum of sizes of buckets which starts from the $i^{th}$ bucket up to the $j^{th}$ bucket, both inclusive. The formal definitions are given in the following formulas for both exact bucket sizes and approximate bucket sizes.
\begin{eqnarray}
\label{eqn:rangecum}
R_{s}(i,j,H) = s(i, H) + s(i+1,H) + \cdots + s(j,H) \\
R_{a}(i,j,H) = a(i, H) + a(i+1,H) + \cdots + a(j,H)
\end{eqnarray}

The definitions given belove are summarized in Table~\ref{tbl:definitions}.

\begin{figure}[!t]
\centering
\includegraphics[width=0.487\textwidth]{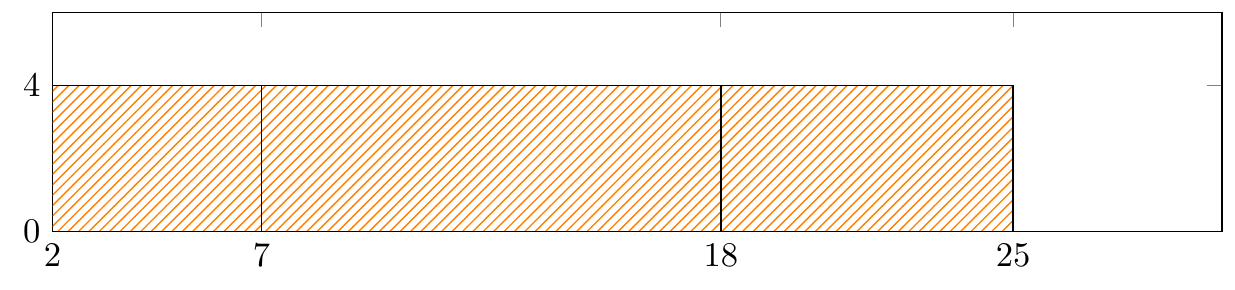}
\caption{A sample equi-depth histogram $H_1$ with $3$ buckets, based on data $\{2,4,5,6,7,10,13,16,18,20,21,25\}$.}
\label{fig:hist1}
\end{figure}

\begin{figure}[!t]
\centering
\includegraphics[width=0.5\textwidth]{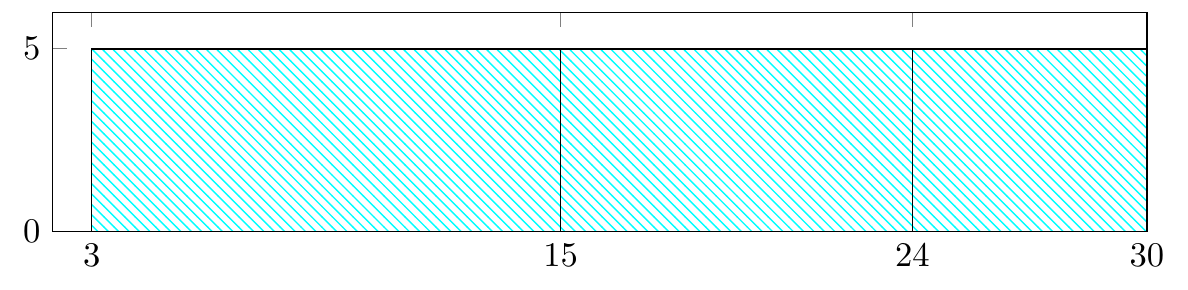}
\caption{Another sample equi-depth histogram $H_2$ with $3$ buckets, which represents data $\{3,9,11,12,14,15,17,19,22,23,24,26,27,29,30\}$.}
\label{fig:hist2}
\end{figure}

Since we completed the definitions for convention, we start to explain the merging process in detail. We have exact $3$-bucket equi-depth histograms $H_1$ and $H_2$ given in Figures~\ref{fig:hist1} and~\ref{fig:hist2} for the example value sets $P_1$ and $P_2$ where $P_1=\{2,4,5,6,7,10,13,16,18,20,21,25\}$ and $P_2=\{3,9,11,12,14,15,17,19,22,23,24,26,27,29,30\}$. The total number of values $N$ is equal to $|P_1| + |P_2| = 12 + 15 = 27$ and let bucket count of final histogram, $\beta$, be 3. As seen in the histograms $H_1$ and $H_2$, $H_1$ has a boundary sequence of $2,7,18,25$ and each $H_1$ bucket has $12/3=4$ values, $H_2$ has $3,15,24,30$ and bucket size of $15/3=5$. In Figure~\ref{fig:histsep}, we show $H_1$ and $H_2$ on the same plot, so therein, we clearly see the overall sequence of boundary values, which is $2,3,7,15,18,24,25,30$. Although the desired final number of buckets $\beta$ may be chosen to be any number less than or equal to 3, we drive the example merging for $\beta=3$.

\begin{figure}[!t]
\centering
\includegraphics[width=0.5\textwidth]{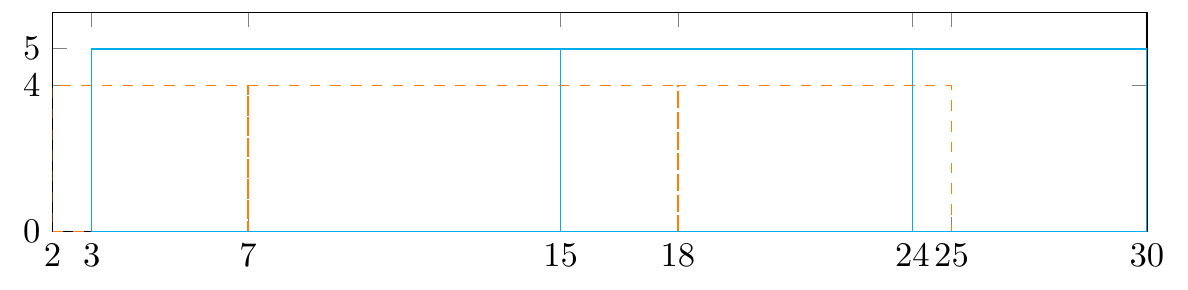}
\caption{Example equi-depth histograms given in Figures~\ref{fig:hist1} (orange-dashed) and~\ref{fig:hist2} (cyan-solid) are coupled together, with boundary sequence $\{2,3,7,15,18,24,25,30\}$.}
\label{fig:histsep}
\end{figure}

\begin{table}[!t]
\renewcommand{\arraystretch}{1.3}
\caption{Symbol Table Used in Section~\ref{sec:algo}}
\label{tbl:definitions}
\centering
\begin{tabular}{|c|l|}
\hline
Symbol & Definition\\
\hline
$H^0$ & \begin{tabular}[c]{@{}l@{}}Pre-histogram after first assembling of $k$ exact\\ equi-depth histograms, $H_1$, $H_2$, $\ldots$, and $H_k$\end{tabular}\\ \hline
$H^*$ & \begin{tabular}[c]{@{}l@{}}Final approximate equi-depth histogram after\\ bucket merging operations of $H^0$\end{tabular}\\ \hline
$H^e$ & \begin{tabular}[c]{@{}l@{}}Exact equi-depth histogram for the union of $k$\\ value sets, $P_1$, $P_2$, $\ldots$, and $P_k$\end{tabular}\\ \hline
$s(i,H)$ & \begin{tabular}[c]{@{}l@{}}The $i^{th}$ bucket size of the equi-depth histogram $H$\end{tabular}\\ \hline
$a(i,H)$ & \begin{tabular}[c]{@{}l@{}}The $i^{th}$ approximate bucket size of the approximate\\ equi-depth histogram $H$\end{tabular}\\ \hline
$S(i,H)$ & \begin{tabular}[c]{@{}l@{}}The $i^{th}$ cumulative size of the equi-depth\\ histogram $H$\end{tabular}\\ \hline
$A(i,H)$ & \begin{tabular}[c]{@{}l@{}}The $i^{th}$ approximate cumulative size of the\\ equi-depth histogram $H$\end{tabular}\\ \hline
$R(i,j,H)$ & \begin{tabular}[c]{@{}l@{}}Sum of bucket sizes starting from the $i^{th}$ bucket\\ up to the $j^{th}$ bucket (both inclusive) of the\\ equi-depth histogram $H$\end{tabular}\\ \hline
\end{tabular}
\end{table}

Let us name the calculated pre-histogram (after first assembling of $H_1$ and $H_2$) as $H^0$, final merged approximate equi-depth histogram resulted from our method as $H^*$, and exact equi-depth histogram for the union of value sets $P_1$ and $P_2$ as $H^e$. Briefly, we start with assembling $H_1$ and $H_2$ in an initial pre-histogram $H^0$ and we merge consecutive buckets of $H^0$ while the merged bucket size is greater than or equal to the exact bucket size $N/\beta$ till the remaining number of buckets is equal to the desired number $\beta$. For $\beta=3$, exact bucket sizes $N/\beta$ should be equal to $27/3=9$ and it can be presented as $s(1,H^e) = s(2,H^e) = s(3,H^e) = 9$. The cumulative bucket sizes for $H^e$ are $S(1,H^e) = 9$, $S(2,H^e) = 18$, and $S(3,H^e) = 27$. Now, we shall examine the creation of the pre-histogram $H^0$. The boundaries of $H^0$ are $2,3,7,15,18,24,25,30$ which are shortly the sorted boundaries of $H_1$ and $H_2$. Since $H^0$ has $(T + 1) \times 2 = (3 + 1) \times 2 = 8$ boundaries, it has $8 - 1 = 7$ buckets. The important part of the creation is approximation of bucket sizes of $H^0$. Before the approximation of bucket sizes, we should determine approximate cumulative bucket sizes of $H^0$ and then we are able to calculate the approximate bucket sizes from the definition of the cumulative bucket size function $A(i,H)$. The approximate cumulative bucket sizes are calculated by presuming that all values in each bucket are at the beginning boundary of the bucket. For example, let us consider the first bucket of $H_1$. In this bucket, we have values $2$, $4$, $5$, and $6$ and we suppose that all these values are at the point $2$. By using this supposition, any cumulative bucket size is easily determined by summing the bucket size of the histogram which holds the next boundary and the previous cumulative bucket size starting with $0$. Thus, since the first boundary of $H^0$ is $2$ and this boundary is the first boundary of $H_1$, the first cumulative approximate bucket size $A(1,H^0)$ is equal to $s(1,H_1) = 4$. The second cumulative approximate bucket size $A(2,H^0)$ is equal to $s(1,H_2) + A(1,H^0) = 5 + 4 = 9$ because the next coming $H^0$ boundary is $3$ and it is the first boundary of $H_2$. After the boundary $3$, the next $H^0$ boundary is $7$ and it is the second boundary of $H_1$. Therefore, the third cumulative approximate bucket size $A(3,H^0)$ is equal to $s(2,H_1) + A(2,H^0) = 4 + 9 = 13$. The remaining approximate bucket sizes are calculated in the same way and $A(4,H^0)$, $A(5,H^0)$, $A(6,H^0)$, and $A(7,H^0)$ are $18$, $22$, $27$, and $27$, respectively. Now, we are able to calculate approximate bucket sizes. Approximate size of the first bucket $a(1,H^0)$ relying between boundary $2$ and $3$ is directly equal to $A(1,H^0)$ and it is $4$. The second approximate bucket size $a(2,H^0)$ is the difference between the first and the second cumulative bucket size. Thus, $a(2,H^0)$ is equal to $A(2,H^0) - A(1,H^0) = 9 - 4 = 5$. Similarly, $a(3,H^0) = A(3,H^0) - A(2,H^0) = 13 - 9 = 4$, $a(4,H^0) = 5$, $a(5,H^0) = 4$, $a(6,H^0) = 5$, and $a(7,H^0) = 0$. Graphical representation of created $H^0$ is given in Figure~\ref{fig:histmerge}.

\begin{figure}[!t]
\centering
\includegraphics[width=0.5\textwidth]{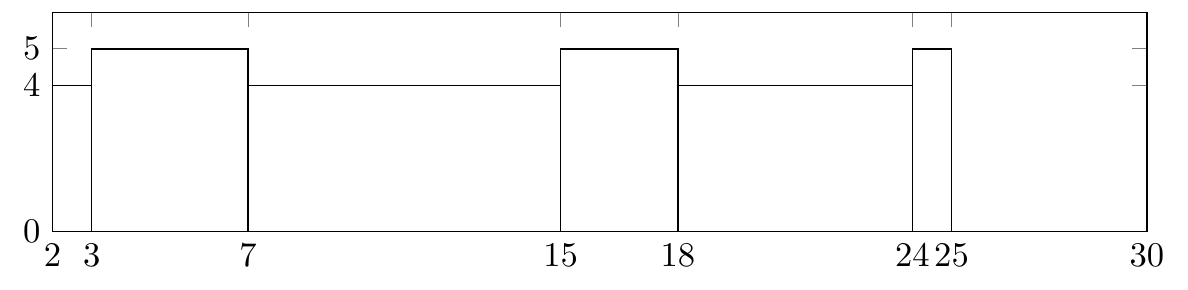}
\caption{The initial pre-histogram $H^0$ constructed just after the first assembling of $H_1$ and $H_2$.}
\label{fig:histmerge}
\end{figure}

Next, we merge the buckets of $H^0$ until the remaining bucket count is equal to $\beta$. We use approximate cumulative bucket size instead of approximate bucket size to decrease the division error while merging. The merging process starts with the first bucket of $H^0$. First of all we compare the first cumulative bucket size of $H^0$, $A(1,H^0)$, with the first cumulative bucket size of exact (ideal) histogram, $S(1,H^e)$, and we see that $A(1,H^0)$ is less than $S(1,H^e)$. We continue comparing the next cumulative bucket size of $H^0$, $A(2,H^0)$, with again the first cumulative bucket size of exact (ideal) histogram, $S(1,H^e)$, and we now see that $A(2,H^0)$ is equal to $S(1,H^e)$. Again, we continue comparing. This time, we see that $A(3,H^0)$ is greater than $S(1,H^e)$.Therefore, the buckets starting from the first bucket to the third bucket except the third one (because the result of the previous comparison is equality) would be merged and this merged bucket would be the first bucket of the final merged approximate histogram, $H^*$. The resulting new bucket size would be $A(2,H^0)$ because the new merged bucket is the first bucket of $H^*$. Then, we are going to create the second bucket of $H^*$. For this creation, we continue comparing cumulative bucket sizes starting from the first not merged bucket number with the second cumulative bucket size of $H^e$. We see that $A(3,H^0)$ is less than $S(2,H^e)$. Next comparison is between the next cumulative of $H^0$ and again the second cumulative of $H^e$. This time equality is seen. We continue comparing the next cumulative of $H^0$, $A(5,H^0)$ with $S(2,H^e)$. At this point, $A(5,H^0)$ is greater than $S(2,H^e)$. Thus, we merge the buckets starting from the third one to the fifth one again except the fifth one and the created new bucket would be the second bucket of $H^*$. This merging process would end when the remaining bucket count is equal to $\beta$ and we get $H^*=\{(2,9),(7,9),(18,9),(30,0)\}$ as seen in Figure~\ref{fig:histfinal}. For comparison, $H^e$ is given in Figure~\ref{fig:histexact}.

\begin{figure}[!t]
\centering
\subfloat[The final approximate histogram $H^*$ constructed by merging $H_1$ and $H_2$.]{
    \centering
    \includegraphics[width=0.5\textwidth]{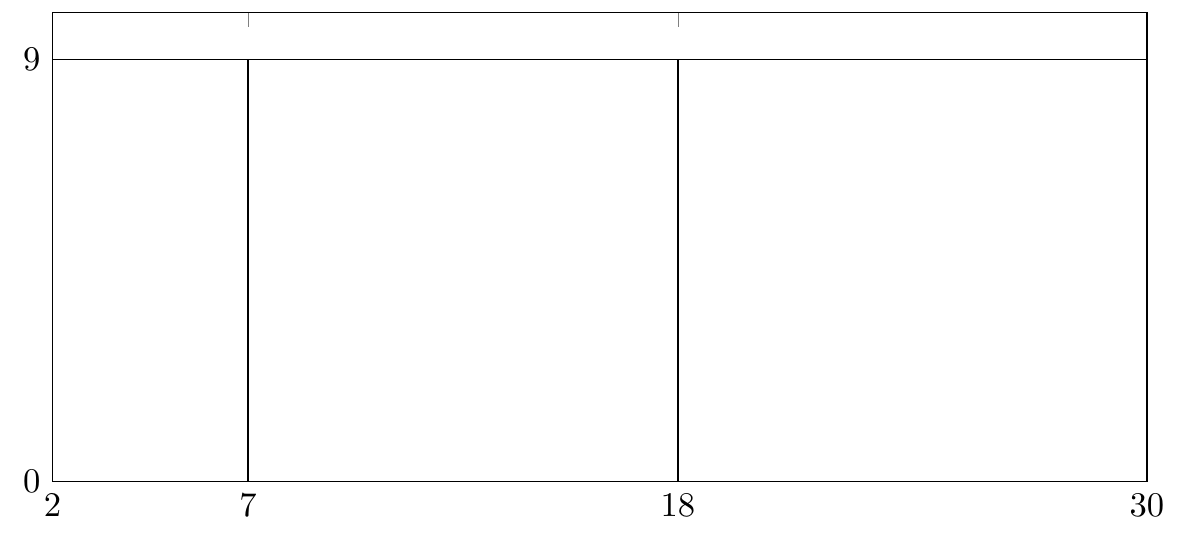}%
    \label{fig:histfinal}
}
\hfil
\subfloat[The exact histogram for union of $P_1$ and $P_2$.]{
    \centering
    \includegraphics[width=0.5\textwidth]{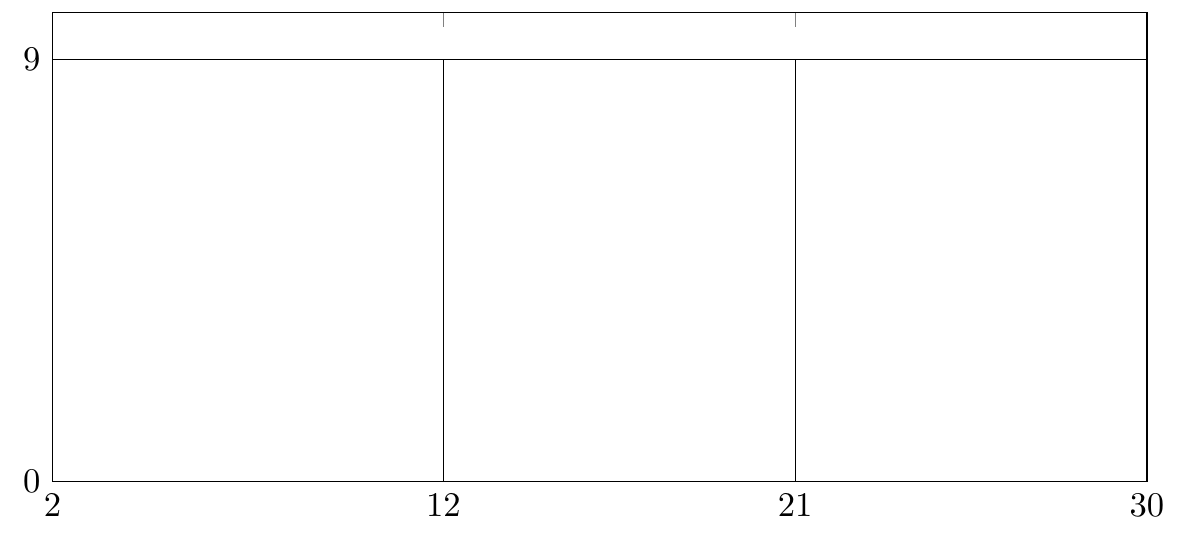}%
    \label{fig:histexact}
}
\caption{The final approximate and exact histograms of the example value sets $P_1$ and $P_2$.}
\label{fig:buildhists}
\end{figure}

The generalization of this method for merging more than 2 histograms is now easy after the one given above. Let us have k value sets ($P_1$, $P_2$, ..., $P_k$) and their summaries (T-bucket equi-depth histograms, $H_1$, $H_2$, ..., $H_k$) to be merged. The merging process for the general case starts with the creation of an initial pre-histogram, $H^0$. This can be done with sorting all boundary values coming from summaries and determining approximate bucket sizes in the same way with the one described above. The calculated histogram $H^0$ has $k \times (T + 1)$ boundaries and thus $k \times (T + 1) - 1$ buckets. The rest of the merging method is exactly the same with the case when we have only two histograms. That is, we combine consecutive buckets of $H^0$ by comparing the cumulative bucket sizes of $H^0$ with cumulative sizes of exact histogram, $H^e$, until $\beta$ buckets remain.

\begin{algorithm}[t]
\KwIn{$H_1, H_2, \ldots, H_k$: $k$ equi-depth histograms each with $T$ buckets, $N$: total number of values, $\beta$: desired bucket count of final histogram}
\KwOut{$H^*$: an approximate equi-depth histogram with $\beta$ buckets.}
$b \leftarrow$ \{sorted boundaries of $H_1, H_2, \ldots, H_k$\} \label{alg:line:b} \\
$s \leftarrow$ \{bucket sizes calculated as described\} \label{alg:line:s} \\
$H^0 \leftarrow$ {\sc CreateHistogram}(b, s) \label{alg:line:h0} \\
$H^* \leftarrow H^0$ \label{alg:line:copy} \\
$last \leftarrow 1$; $next \leftarrow 1$; $current \leftarrow 1$ \\
$\rm{remaining} \leftarrow k(T+1)-1$ \\
\While{$\rm{remaining} > \beta$} { \label{alg:line:loop:begin}

  \While{$A(next,H^0) \leq current \times N / \beta$} { \label{alg:line:iloop:begin} 
      $next \leftarrow next + 1$ \label{alg:line:iloop:end} \\
  }
  {\sc MergeBuckets}($last, next - 1, H^*$) \label{alg:line:combine} \\
  $last \leftarrow next$; $current \leftarrow current + 1$ \\
  $\rm{remaining} \leftarrow \rm{remaining} - (next - 1 - last)$  \label{alg:line:loop:end}
}
\Return{$H^*$}
\caption{Equi-depth Histogram Merging}
\label{alg:merge}
\end{algorithm}

Algorithm~\ref{alg:merge} shows the pseudocode of the explained method above. The algorithm takes $T$-bucket equi-depth histograms of $k$ value sets, total number of values, $N$, which is the sum of all sizes of value sets, and desired bucket count of final histogram, $\beta$ as inputs and constructs and returns final approximate $\beta$-bucket equi-depth histogram, $H^*$. Lines~\ref{alg:line:b} through~\ref{alg:line:h0} of the algorithm is performed for the creation of the initial pre-histogram, $H^0$. First, boundaries of input histograms are sorted at Line~\ref{alg:line:b} and then bucket sizes are calculated according to the above example at Line~\ref{alg:line:s}. The subroutine {\sc CreateHistogram} called at Line~\ref{alg:line:h0} simply creates a histogram from given boundary and bucket size sets and at that line $H^0$ is created from $b$ and $s$. The created $H^0$ has $k \times (T + 1)$ boundaries and $k \times (T + 1) - 1$ buckets. After creation of $H^0$, it would be copied to $H^*$ at Line~\ref{alg:line:copy}. Once $H^0$ is created and copied to $H^*$, required buckets are combined on $H^*$ considering ideal bucket size, $N/\beta$. The main {\tt While} loop iterates until the remaining number of buckets is equal to $\beta$. The inner {\tt While} loop given in Lines~\ref{alg:line:iloop:begin} and~\ref{alg:line:iloop:end} seeks for the next feasible point of buckets to combine at each iteration of the main loop. When such a point is found, we apply {\sc MergeBuckets} subroutine which combines buckets from $last$ to $next - 1$, both inclusive, on $H^*$ as shown in Line~\ref{alg:line:combine}. Notice that {\sc MergeBuckets} merges buckets according to the first state of bucket indexes.

For the asymptotic performance of the algorithm, sorting boundaries is likewise merging $k$ sorted lists and it can be done in $O(Tk\log k)$. Bucket sizes and {\sc CreateHistogram} subroutine can both run in $O(Tk)$ at Lines~\ref{alg:line:s} and~\ref{alg:line:h0}. For the inner loop, the increment at Line~\ref{alg:line:iloop:end} can be performed at most $\beta$ times. The number of iterations for the main loop changes with the decrease in remaining bucket counts. Observe that the decrease is equal to the inner loop iteration number and {\sc MergeBuckets} subroutine takes the same time with the inner loop for each main loop iteration. Considering this observation, total time required for the main loop is $O(Tk)$. Consequently, the initial sorting dominates the rest of the algorithm, and the algorithm runs in $O(Tk\log k)$-time.

Let us debug the algorithm line by line for the two example $3$-bucket equi-depth histograms $H_1$ and $H_2$ given in Figures~\ref{fig:hist1} and \ref{fig:hist2}. Recall that $H_1$ and $H_2$ are histograms of value sets $P_1$ and $P_2$. Therefore $N$ is equal to $|P_1| + |P_2| = 12 + 15 = 27$. Let $\beta$ is equal to $3$. We know $H^0=\{(2,4),(3,5),(7,4),(15,5),(18,4),(24,5),(25,0),(30,0)\}$ from the given detailed explanation above. In addition, the start state of $H^*$ is the same as $H^0$. The variables $last$, $next$, and $current$ is equal to $1$ and $remaining$ is calculated as $k(T + 1) - 1 = 2(3 + 1) - 1 = 7$. Because the $remaining$ is greater than $\beta$ at current state, we enter the main loop. For the inner loop, $A(1, H^0)$ and $A(2, H^0)$ is less or equal to $current \times N / \beta$ which is $1 \times 27 / 3 = 9$ but $A(3, H^0)$ is greater than $9$. Hereby, inner {\tt While} loop 2 times and $next$ would be $3$. Then, {\sc MergeBuckets} subroutine merges the buckets $1$ and $2$ of $H^*$. The illustration of $H^*$ is shown in Figure~\ref{fig:histdebug} after merging. The variables $last$ and $current$ are updated after the execution of {\sc MergeBuckets} is finished and $last$ would be $3$ and $current$ would be $2$. The $remaining$ variable, keeping the remaining bucket number of $H^*$, would be $6$ after the calculation is done at Line~\ref{alg:line:loop:end}. The main loop finishes after $H^*$ has $\beta$ buckets and execution of the algorithm ends with returning the created $H^*$.

\begin{figure}[!t]
    \centering
    \includegraphics[width=0.5\textwidth]{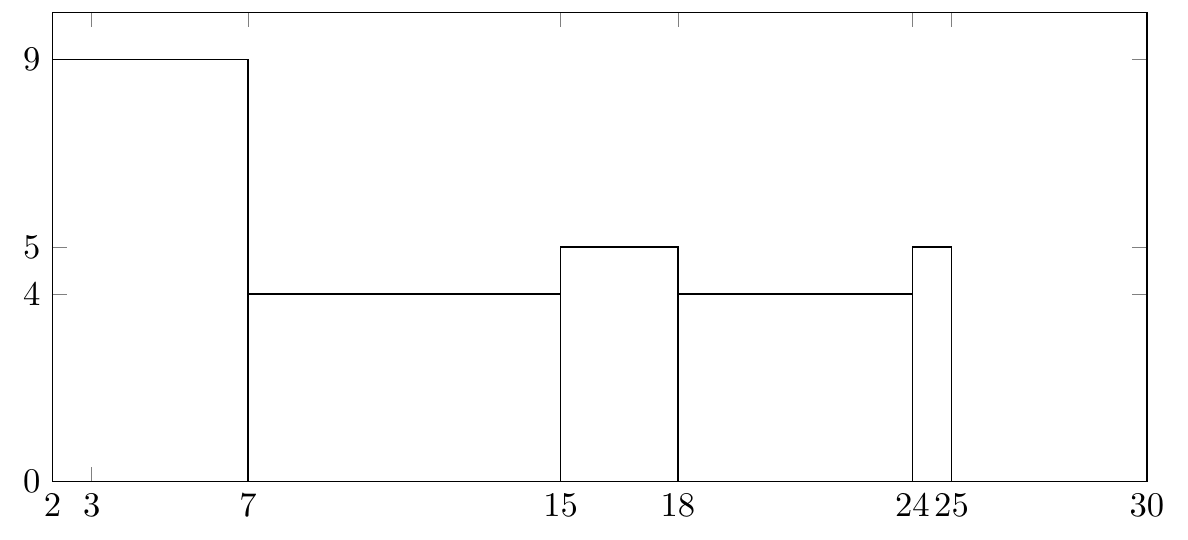}
    \caption{The state of $H^*$ after the first iteration of main loop of~\ref{alg:merge}.}
    \label{fig:histdebug}
\end{figure}

Now, we  discuss the error bounds of the output histogram $H^*$. The following two theorems and their proofs verify the error bounds on bucket sizes and the sum of any range of bucket sizes of $H^*$.
\begin{theorem}
Let $H_1$, $H_2$, \ldots, $H_k$ be $T$-bucket equi-depth histograms of value sets $P_1$, $P_2$, \ldots, $P_k$, and $H^*$ be the approximate $\beta$-bucket equi-depth histogram where $\beta \leq T$ constructed by the algorithm. Then, the size of any bucket $a(i, H^*)$ is $(N/\beta) \pm \varepsilon_{max}$ where $\varepsilon_{max} < 2\beta/T \times (N/\beta)$.
\label{thm:single}
\end{theorem}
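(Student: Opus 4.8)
The plan is to separate two independent sources of error and bound each by $N/T$. The first is a \emph{discretisation} error: the $\beta-1$ interior boundaries of $H^{*}$ must be chosen among the $k(T{+}1)$ boundaries of $H^{0}$, so the running cumulative count can be steered toward the ideal multiples of $N/\beta$ only up to the size of one bucket of $H^{0}$. The second is the \emph{modelling} error of $H^{0}$: the cumulative sizes $A(\cdot,H^{0})$ are computed under the fiction that every value of $P_{j}$ sits on the left boundary of its bucket of $H_{j}$. Summing the two yields $\varepsilon_{max}<2N/T=(2\beta/T)(N/\beta)$. Throughout, write $M=k(T{+}1)-1$ for the number of buckets of $H^{0}$ and $A_{m}:=A(m,H^{0})$ with $A_{0}:=0$; thus $A$ is non-decreasing, $A_{m}$ is the \emph{approximate} number of input values lying left of the $(m{+}1)$-st boundary of $H^{0}$, and $A_{M}=N$.

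First I would note that each step $A_{m+1}-A_{m}$ is the sum, over the input histograms whose boundary coincides with the $(m{+}1)$-st boundary of $H^{0}$, of the exact sizes of the corresponding buckets; each such size is $N_{j}/T$ (or $0$ for a last boundary), and the partitions are disjoint, so $0\le A_{m+1}-A_{m}\le\sum_{j}N_{j}/T=N/T\le N/\beta$, where the final inequality is the one and only use of the hypothesis $\beta\le T$. Now recall that the algorithm makes bucket $i$ of $H^{*}$ consist of the $H^{0}$-buckets $c_{i-1}{+}1,\dots,c_{i}$, where $c_{0}=0$, $c_{\beta}=M$, and $c_{i}=\max\{m:A_{m}\le iN/\beta\}$ for $1\le i\le\beta-1$. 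Using the step bound, a short induction gives $c_{i-1}<c_{i}$ (so no bucket of $H^{*}$ is empty) and
\begin{equation*}
\tfrac{i}{\beta}N-\tfrac{1}{T}N \;<\; A_{c_{i}} \;\le\; \tfrac{i}{\beta}N ,\qquad 0\le i\le\beta ,
\end{equation*}
the right inequality being the definition of $c_{i}$, the left one following from $A_{c_{i}+1}>iN/\beta$ together with $A_{c_{i}+1}-A_{c_{i}}\le N/T$ (for $i=0,\beta$ both bounds are immediate since $A_{0}=0$ and $A_{M}=N$). Subtracting consecutive instances, $a(i,H^{*})=A_{c_{i}}-A_{c_{i-1}}$ lies strictly between $N/\beta-N/T$ and $N/\beta+N/T$; this already settles the statement for the \emph{stored} bucket sizes, with a factor to spare.

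The substantive step is to compare $A_{m}$ with the \emph{true} number $T_{m}$ of values of $P_{1}\cup\cdots\cup P_{k}$ lying left of the $(m{+}1)$-st boundary of $H^{0}$. I would argue one partition at a time. Fix $P_{j}$ and let its $\ell$-th boundary be the largest boundary of $H_{j}$ not exceeding the $m$-th boundary of $H^{0}$ ($\ell=0$ if there is none, a trivial case). Then the contribution of $P_{j}$ to $A_{m}$ is exactly $S(\ell,H_{j})=\ell N_{j}/T$, whereas its true contribution to $T_{m}$ is the number of its values strictly below the next boundary of $H^{0}$; since that boundary lies inside, or at the right end of, the $\ell$-th bucket of $H_{j}$, this true contribution is at least $(\ell-1)N_{j}/T$ and at most $\ell N_{j}/T$. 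Summing over $j$ and using $\sum_{j}N_{j}=N$ gives $0\le A_{m}-T_{m}<N/T$, the strictness coming from the fact that the $\ell$-th bucket of every touched partition contains at least its own left boundary as a genuine value. Keeping the half-open bucket convention and the closed last bucket straight is the only place I expect real friction.

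To finish, note that the true size of bucket $B_{i}$ of $H^{*}$ is $T_{c_{i}}-T_{c_{i-1}}$, with $T_{c_{\beta}}=N$ for the last bucket. Applying the previous step at $m=c_{i}$ and $m=c_{i-1}$ gives $A_{c_{i}}-N/T<T_{c_{i}}\le A_{c_{i}}$ and the analogue for $c_{i-1}$, so $T_{c_{i}}-T_{c_{i-1}}$ differs from $a(i,H^{*})=A_{c_{i}}-A_{c_{i-1}}$ by less than $N/T$. Combining this with $|a(i,H^{*})-N/\beta|<N/T$ from the second paragraph, the triangle inequality gives $\bigl|(\text{size of }B_{i})-N/\beta\bigr|<2N/T$, i.e.\ the size of every bucket is $N/\beta\pm\varepsilon_{max}$ with $\varepsilon_{max}<2\beta/T\times(N/\beta)$, as claimed; the same estimate for $B_{\beta}$ uses only the identity $T_{M}=A_{M}=N$.
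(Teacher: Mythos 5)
Your proof is correct, and it is organized around a different decomposition of the error than the paper's own argument. The paper's proof fixes one merged bucket of $H^*$ and bounds the \emph{spread} between the largest and smallest possible true contents of that bucket: at each of its two delimiting boundaries, the buckets of the other constituent histograms that are cut by that boundary contribute at most $\sum_j |P_j|/T$ of uncertainty, so the true size ranges over an interval of width less than $2N/T$ (the constant $C$ plus at most $N/T$ of slack at the left end and $N/T$ at the right end). The factor $2$ there comes from the two ends of the bucket, and the connection between this spread and the nominal value $N/\beta$ (i.e., the fact that the algorithm's greedy choice of cut points keeps each merged bucket centered near $N/\beta$) is left implicit. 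You instead split the error into (i) a discretization term, $|a(i,H^*)-N/\beta|<N/T$, proved from the cut-point rule $c_i=\max\{m:A_m\le iN/\beta\}$ together with the step bound $A_{m+1}-A_m\le N/T$, and (ii) a modelling term, $0\le A_m-T_m<N/T$, comparing the left-boundary-presumption cumulative counts to the true cumulative counts partition by partition, and then combine the two with the triangle inequality. This buys you two things the paper does not state: a tighter bound ($N/T$ rather than $2N/T$) on the \emph{stored} bucket sizes $a(i,H^*)$, and an explicit, rigorous link between the true bucket contents and the ideal size $N/\beta$, which is exactly the step the paper glosses over; it also makes visible where $\beta\le T$ is used (non-emptiness of the merged buckets). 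The price is the bookkeeping you already flag: the half-open/closed-last-bucket convention, the $\ell=T+1$ corner case, and coincident boundary values, where your strict inequalities may degrade to non-strict ones — but the paper's own proof is no more careful on those degeneracies, so this does not constitute a gap.
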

\begin{proof}
\label{prf:single}
Recall that the calculations of bucket sizes of $H^0$ depends on supposition that all values in each bucket are at the beginning boundary of the bucket and $H^*$ is some-buckets-merged version of $H^0$. Now consider an $i^{th}$ bucket between the $i^{th}$ and the ${i+1}^{th}$ boundaries (boundaries may be any of the two consecutive boundaries of $H^0$) of $H^*$ illustrated in Figure~\ref{fig:single_proof}. As seen in the figure, all of the values in the buckets divided by the $i^{th}$ boundary may stay at the right hand side of the boundary in contrast to our assumption and all values in the buckets divided by the ${i+1}^{th}$ boundary may stay at the left hand side of the boundary. In this case, $a(i, H^*)$ gets the maximum value. Vice versa, $a(i, H^*)$ gets the minimum value in the case that all possible values in the divided buckets stay out of the $i^{th}$ bucket in contrast to the case seen in Figure~\ref{fig:single_proof}. The following calculation shows the maximum value of $a(i, H^*)$.
\begin{eqnarray}
{a(i, H^*)}_{max} & = & C + |P_1|/T + |P_3|/T \nonumber \\
&& {+}\: \cdots + |P_k|/T \nonumber \\
&& {+}\: |P_1|/T + |P_2|/T \nonumber \\
&& {+}\: \cdots + |P_{k-1}|/T \label{eqn:thm:max1}
\end{eqnarray}
where $C$ is constant which is the sum of the sizes of the buckets relying completely in the $i^{th}$ bucket and $|P_1|$, $|P_2|$, \ldots, $|P_k|$ is the size of sets. Adding and subtracting $|P_2|/T$ and $|P_k|/T$ to the equation, we get the following equation.
\begin{eqnarray}
{a(i, H^*)}_{max} & = & C + (|P_1| + |P_2| + \cdots + |P_k|)/T \nonumber \\
&& {+}\: (|P_1| + |P_2| + \cdots + |P_k|)/T \nonumber \\
&& {-}\: |P_2|/T - |P_k|/T \nonumber \\
& = & C + 2N/T - |P_2|/T - |P_k|/T \nonumber \\
& < & C + 2N/T
\end{eqnarray}
And ${a(i, H^*)}_{min}$ is equal to $C$ because no additional values are located in the $i^{th}$ bucket except the constant ones. Once ${a(i, H^*)}_{max}$ and ${a(i, H^*)}_{min}$ are determined, $\varepsilon_{max}$ would be the difference between them.
\begin{eqnarray}
\varepsilon_{max} & = & {a(i, H^*)}_{max} - {a(i, H^*)}_{min} \nonumber \\
& < & C + 2N/T - C \nonumber \\
& < & 2N/T
\label{eqn:thm:single_error_bound1}
\end{eqnarray}
The following equation shows another expression of $\varepsilon_{max}$ in terms of exact (ideal) bucket size $N/\beta$.
\begin{eqnarray}
\varepsilon_{max} & = & 2N/T \nonumber \\
& < & 2N\beta/T\beta \nonumber \\
& < & 2\beta/T \times (N/\beta)
\label{eqn:thm:single_error_bound2}
\end{eqnarray}

\begin{figure}[!t]
    \centering
    \includegraphics[width=0.5\textwidth]{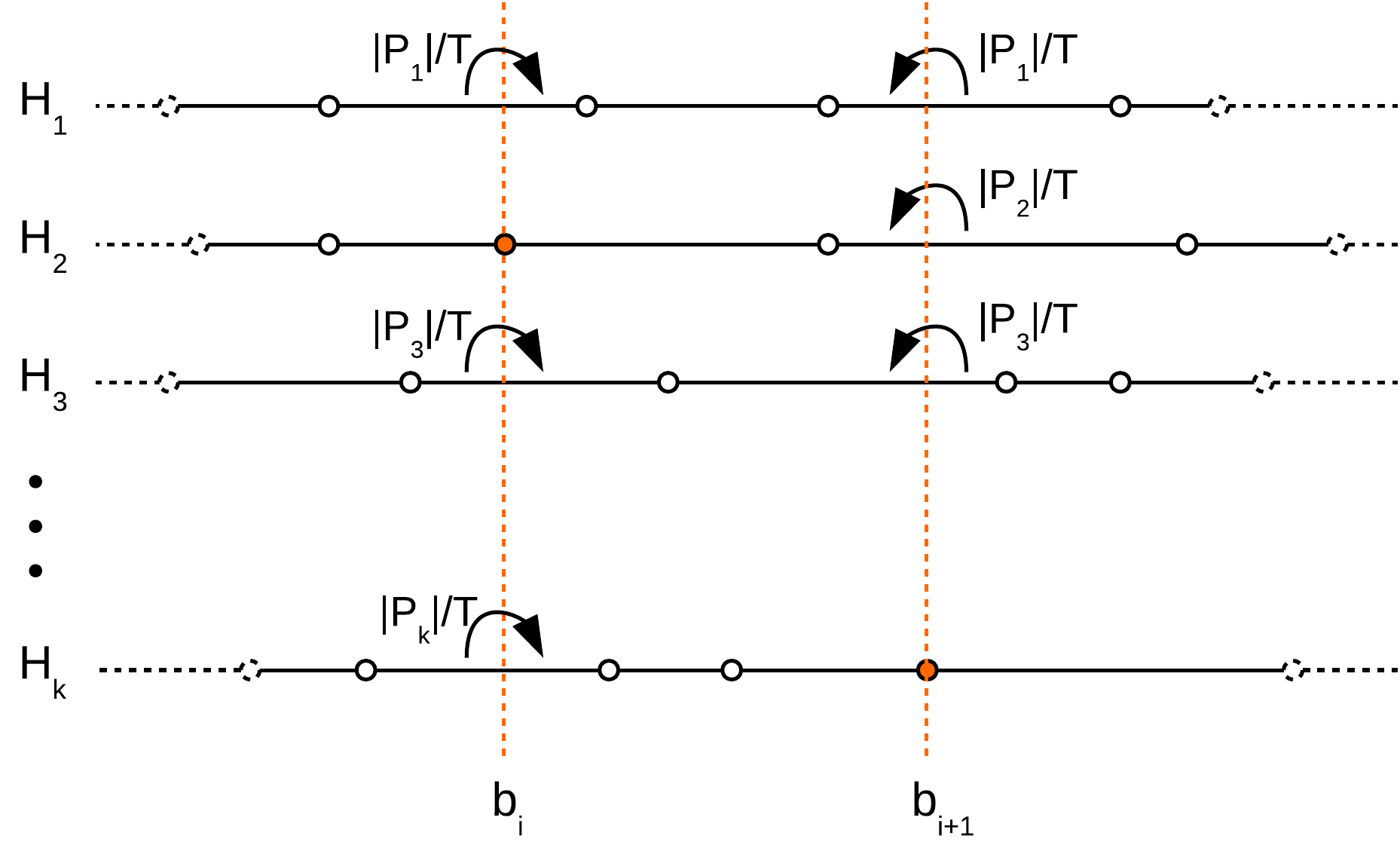}
    \caption{Illustration of maximum bucket size.}
    \label{fig:single_proof}
\end{figure}
\end{proof}

\begin{theorem}
Let $H_1$, $H_2$, \ldots, $H_k$ be $T$-bucket equi-depth histograms of value sets $P_1$, $P_2$, \ldots, $P_k$, and $H^*$ be the approximate $\beta$-bucket equi-depth histogram where $\beta \leq T$ constructed by the algorithm. Then, the size of any range spanning $m$ buckets, $R_a(i,i+m,H^*)$, is $m \times (N/\beta) \pm \varepsilon_{max}$ where $\varepsilon_{max} < 2\beta/T \times (N/\beta)$.
\label{thm:range}
\end{theorem}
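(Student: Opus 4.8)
The plan is to recycle the geometry of the proof of Theorem~\ref{thm:single} almost verbatim, by observing that a range of consecutive buckets of $H^*$ has exactly the same structure as a single bucket of $H^*$: it is a block of consecutive buckets of $H^0$, delimited on the left by one boundary and on the right by another, and the only ambiguity about how many original data values truly land inside the block comes from the buckets of the input histograms $H_j$ that are cut by these two delimiting boundaries (at most one such straddling bucket per $H_j$). Every bucket of every $H_j$ that lies strictly between the two delimiters contributes its exact size $|P_j|/T$ to $R_a(i,i+m,H^*)$ with no uncertainty, so all of these can be collected into a single constant $C$, precisely as in the single-bucket argument; here $C$ plays the role of $m\times(N/\beta)$, the ideal range size, which the range inherits from the stopping rule of Algorithm~\ref{alg:merge} by the same reasoning that gives the $N/\beta$ centring in Theorem~\ref{thm:single}.

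Concretely, I would first write $R_a(i,i+m,H^*) = A(i+m,H^*) - A(i-1,H^*)$, so that the range size is a difference of two approximate cumulative sizes, each pinned (by the inner {\tt While} loop, which merges $H^0$-buckets until $A(next,H^0)$ first exceeds $current\times N/\beta$) to within one $H^0$-bucket of $(i+m)\times(N/\beta)$ and $(i-1)\times(N/\beta)$ respectively. Then I would bound the deviation exactly as in Figure~\ref{fig:single_proof}: ${R_a(i,i+m,H^*)}_{max}$ is attained when all values in the buckets cut by the left delimiter fall to its right and all values in the buckets cut by the right delimiter fall to its left, giving ${R_a}_{max} = C + (\text{slack at left delimiter}) + (\text{slack at right delimiter})$, whereas ${R_a}_{min} = C$. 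Each slack term is a sum of at most one term $|P_j|/T$ per input histogram with at least one histogram missing (the one owning that delimiter), hence each is $< N/T$; subtracting, $\varepsilon_{max} = {R_a}_{max} - {R_a}_{min} < 2N/T$, and the same algebra as in Equation~\ref{eqn:thm:single_error_bound2} rewrites this as $\varepsilon_{max} < 2\beta/T \times (N/\beta)$, independent of $m$.

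The step I expect to be the crux is arguing that the error does \emph{not} grow with $m$ --- that is, resisting the naive bound $R_a(i,i+m,H^*) = a(i,H^*) + \cdots + a(i+m,H^*)$ with each term within $\varepsilon_{max}$ of $N/\beta$, which would only yield an $m\varepsilon_{max}$ estimate. The point to make carefully is that the per-bucket errors are correlated: the right delimiter of bucket $j$ is the left delimiter of bucket $j+1$, so when the bucket-wise estimates over the range are summed the interior delimiters telescope away and only the two outermost delimiters of the whole range contribute uncertainty, reproducing the single-bucket analysis. Phrasing the whole computation through the cumulative function $A$ rather than through individual bucket sizes is the cleanest way to make this cancellation rigorous and to see that the bound is $2\beta/T$ regardless of how many buckets the range spans.
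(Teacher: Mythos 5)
Your proposal is correct and follows essentially the same route as the paper's proof: the paper also reduces the range case to the single-bucket argument by observing that any input-histogram bucket cut by an interior boundary lies entirely inside the range and hence contributes only to the constant $C$, so that only the two outermost delimiting boundaries introduce uncertainty, giving the same $\varepsilon_{max} < 2N/T = 2\beta/T \times (N/\beta)$ bound independent of $m$. Your telescoping formulation via the cumulative function $A$ is just a cleaner phrasing of that same cancellation, not a different argument.
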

\begin{proof}
Let us start with proving the error bound of range size of two consecutive buckets. Figure~\ref{fig:range_proof} shows this case. There are two consecutive buckets and three boundaries ($b_i$, $b_{i+1}$, and $b_{i+2}$), the middle one ($b_{i+1}$) splits the two buckets. Notice that the intersected buckets by $b_{i+1}$ completely rely in the range of the two buckets and this means that the sizes of these buckets are added as a constant to the range size $R_a(i,i+1,H^*)$. As a result, this proof turns into the proof of error bound of bucket size given in Proof~\ref{prf:single} and Equation~\ref{eqn:thm:single_error_bound1} and Equation~\ref{eqn:thm:single_error_bound1} also proves Theorem~\ref{thm:range}. The general case -spanning ranges includes more than two buckets- can also transform into a single bucket problem in the same way with the case with two buckets.

\begin{figure}[!t]
    \centering
    \includegraphics[width=0.5\textwidth]{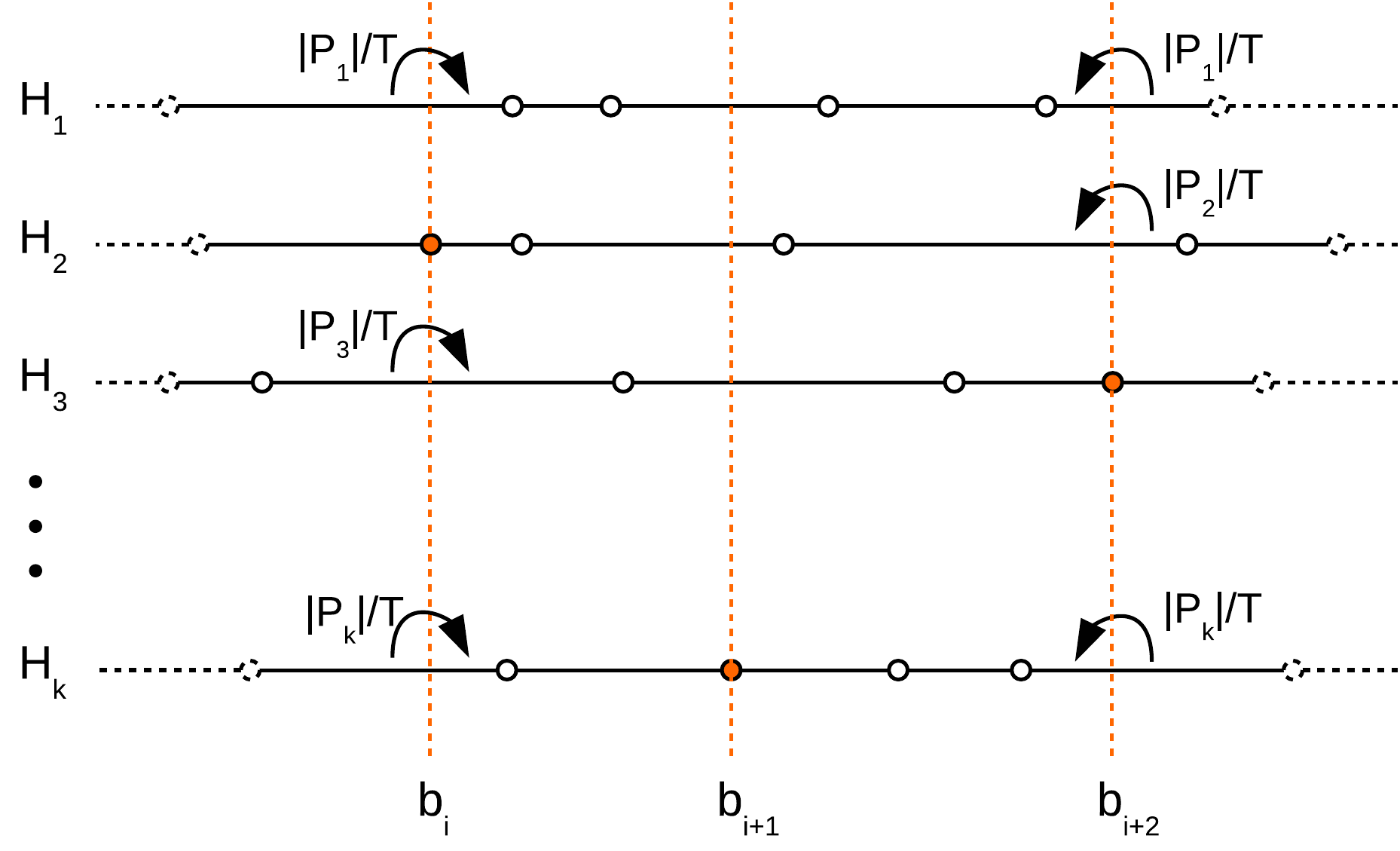}
    \caption{Illustration of maximum size of a range of buckets.}
    \label{fig:range_proof}
\end{figure}
\end{proof}

According to Theorems~\ref{thm:single} and~\ref{thm:range}, users can bound the maximum bucket size error of final $\beta$-bucket approximate equi-depth histogram $H^*$ by selecting appropriate bucket numbers $\beta$ and $T$. For example, let us calculate $T$, number of buckets of equi-depth histograms of data partitions kept in the summary files, in terms of $\beta$ for getting final merged histograms, the maximum bucket size errors of which do not exceed 5\% of the ideal bucket size ($N/\beta$). If we use Equation~\ref{eqn:thm:single_error_bound2}, we can find the minimum number of buckets $T$ needed to satisfy the 5\% error condition as follows.
$$\varepsilon_{max} < 2\beta/T \times (N/\beta) \leq 0.05(N/\beta)$$
$$40\beta \leq T$$
Consequently, the required bucket size $T$ should be at least 40 times $\beta$ which is the desired number of buckets of constructed histograms using our method.

\section{Implementation With Hadoop Map-Reduce}
\label{sec:impl}
In this section, we explain the implementation details of our histogram processing framework on Hadoop MapReduce. The framework consists of two main MapReduce jobs. One of them is named as Summarizer which runs offline and is scheduled for summarizing the new coming data to HDFS. The Summarizer constructs a $T$-bucket equi-depth histogram of the data. After summarizing, the resulting equi-depth histograms are stored in HDFS. The second job, Merger, is run on-demand according to users' requests. Its duty is to merge the related summaries from HDFS by considering user requests and to construct the final $\beta$-bucket approximate equi-depth histogram.

The overview picture of the histogram processing framework is given in Figure~\ref{fig:jobs}. In the left of the figure, HDFS holds whole data including the new data, summary files, and created histograms according to user requests. The framework is in the right of the picture and Summarizer and Merger jobs take place in the framework. Every time, new data is pushed to HDFS, the Summarizer constructs its summary ($T$-bucket equi-depth histogram) and saves it to HDFS, again. When a user requests an equi-depth histogram of desired partitions (it can be any set of data partitions), the Merger processes the request by merging the related summaries of desired partitions and saves the merged final histogram to HDFS. These jobs can also be implemented in the Hive and Pig as user functions.

\begin{figure}[!t]
    \centering
    \includegraphics[width=0.4\textwidth]{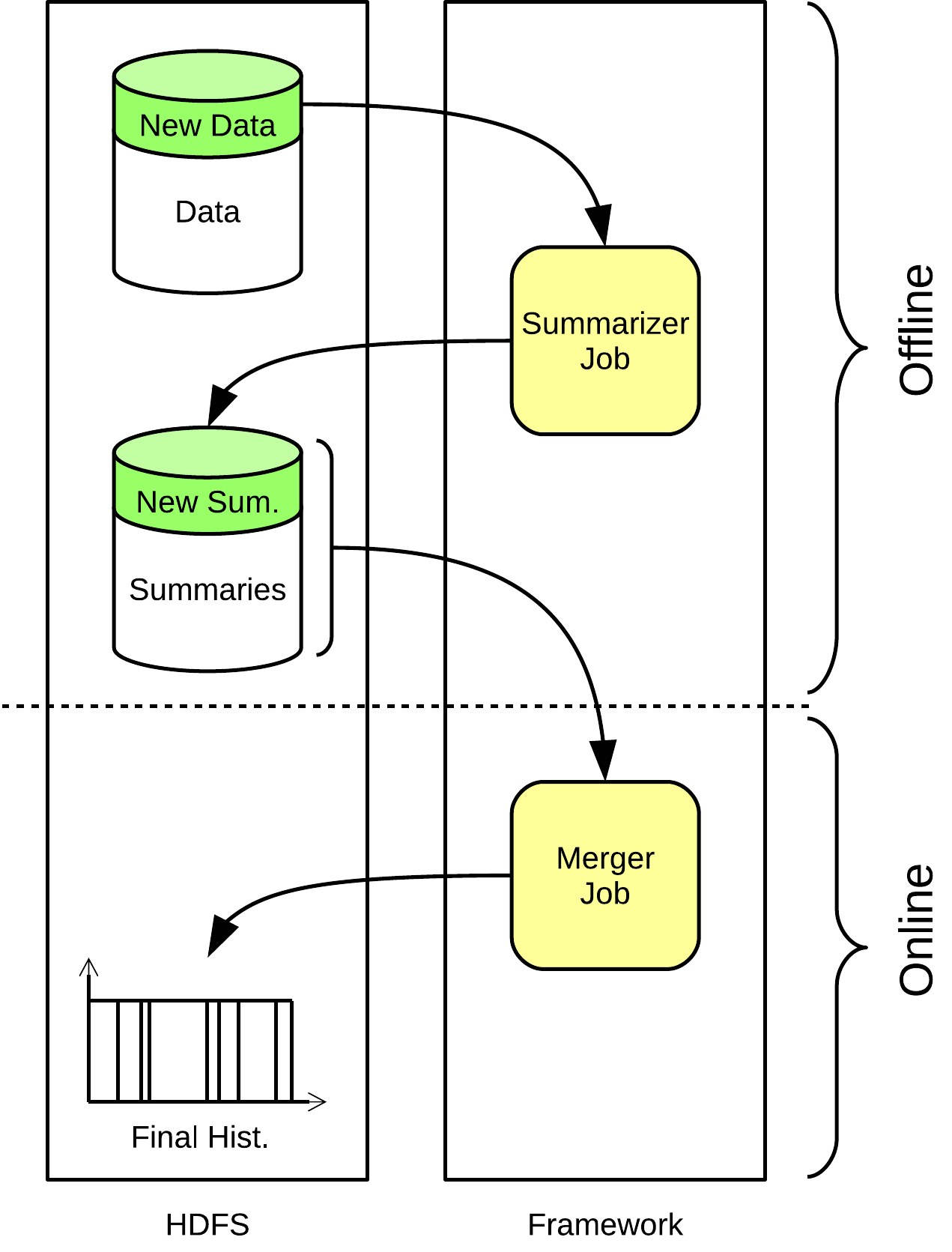}
    \caption{Overview of Proposed Framework}
  \label{fig:jobs}
\end{figure}

\section{Related Work}
\label{sec:rel}
Basically, histograms are used to get quick distribution of information from the given data. This quick information is used especially in database systems in computer science e.g. selectivity estimation to optimize queries, load balancing of join queries, and much more~\cite{ioannidis2003history}. There are different types of histograms and each type of histogram has different properties~\cite{poosala1996improved}. Exact histogram construction is not feasible when the data is too big or the data is frequently updated. In such cases, histograms are constructed from sampled data and/or maintained according to the updated data~\cite{gilbert2002fast}. This type of histograms are called approximate histograms rather than exact histograms. Approximate histogram construction from sampled data can be divided into two categories by sampling method~\cite{shi2013hedcplusplus} which are tuple-level sampling and block-level sampling. Tuple-level sampling method uses uniform-random-sampling to sample the data at tuple level to construct an approximate histogram at the desired error bound~\cite{gibbons1997fast,chaudhuri1998random}. Gibbons et al.~\cite{gibbons1997fast} proposed a sampling-based incremental maintenance approach of approximate histograms. The proposed approach, backing sample, keeps the sampled tuples up-to-date in a relation. A bound of the amount of the sampling size for a given error bound studied by Chaudhuri et al.~\cite{chaudhuri1998random} in addition to proposing an adaptive page sampling algorithm. The second method, block-level sampling, exemplifies the data according to an iterative cross-validation based approach~\cite{chaudhuri2001histogram,chaudhuri2004effective}. Chaudhuri et al.~\cite{chaudhuri2004effective} proposed a method for approximate histogram construction using an initial set of data and iteratively updated the constructed histogram until the histogram error is under the predetermined level. All of the proposed approaches above, however, are for single-node databases.

When the data is too big to handle in a single-node database, the data is distributed to multi-nodes. One of the well-known distributed data storage frameworks is Hadoop Distributed File System (HDFS)~\cite{borthakur2008hdfs} and the data processing framework of the stored data in the HDFS is Hadoop MapReduce~\cite{dean2010mapreduce}. The histogram construction of such distributed data is not well-studied and there is less work on histogram creation of distributed data than the ones on undistributed data. One of the adapted methods for contructing approximate histogram is tuple-level sampling. Okcan et al.~\cite{okcan2011processing} proposed a tuple-level sampling based algorithm to construct approximate equi-depth histograms for distributed data to improve processing theta-joins using MapReduce. The algorithm works as follows. In the map section of a MapReduce Job, a predefined number of tuples are selected randomly by scanning the whole data and outputted. The tuples are sorted and sent to the reducer. The reducer of the job determines and outputs the boundaries of equi-depth histograms. In~\cite{jestes2011building}, a method for approximate wavelet histogram construction for big data using MapReduce is proposed and an improved sampling method -ignoring low frequent sampled keys in splits- is given. The drawback of such histogram construction algorithms of distributed data using tuple-level sampling is that scanning the whole data is a time consuming process. Another approximate histogram construction method is proposed in~\cite{shi2013hedcplusplus}. This method also uses a sampling method named two-phase sampling which samples the whole data at block-level and constructs the approximate histogram and calculates the error. If the error is not in the desired error boundary, the additional sampling size needed is calculated and histogram construction process is repeated. The insufficiency of this method is that histogram is rebuilt for every new data and it requires a customized MapReduce framework. In this paper, we propose a novel approximate equi-depth histogram construction method with a log histogram monitoring framework that users can query the daily stored log files for their equi-depth histogram. In the proposed method, a MapReduce Job is scheduled to summarize the daily stored log files which means that the exact equi-depth histogram of each log file is constructed and stored in corresponding summary files and another MapReduce Job merges the summaries of intended log files for approximate equi-depth histogram construction.

\section{Experimental Results}
\label{sec:exp}
In this section, we will describe how we tested the proposed method. The method was implemented on Hadoop MapReduce framework and was tested on two different datasets. One of them is synthetic data with 155 million of tuples created by using Gumbel distribution for skewness to represent the response of the method for skewed data. The other one is 295 GB uncompressed real data which is taken from hourly page view statistics of Wikipedia. The data consists of approximately 5 billion tuples which belong to January 2015 and each tuple has 4 columns which are $language$, $page name$, $page views$ and $page size$. We used $page size$ for histogram construction. The proposed method ($merge$) is compared with corrected tuple level random sampling ($tuple$). By definition, bare tuple level random sampling collects tuples randomly and constructs histogram with collected tuples but doing so does not work well when the data is sparse at the edges. Therefore, we fix this problem by including the edge values to the collected tuples by default. As mentioned in Subsection~\ref{sec:prob}, histogram construction of the data coming from daily logs is an important issue and tuple level random sampling method is also unfavorable for constructing a histogram of a given time interval. Hereby, sampling stage of tuple level is run offline to compare time spendings.

All equi-depth histograms are build with bucket size of 254 as used by Oracle as default bucket size for histogram enhancement. We fundamentally run two types of tests to represent the effectiveness of the proposed method in terms of boundary and bucket size error and run time. The first test represents the results according to T changes which is daily exact histogram bucket size for the proposed method and sample size for tuple level sampling. The second test represents the run time efficiency of histogram construction for the changes in a given time interval.

Approximate histograms may have two types of error. One of them is that approximate histograms may not have the same bucket boundaries with the exact ones and the other is that bucket sizes of the approximate histograms may deviate from the exact ones. The former error is named as boundary error ($\mu_{b}$) and defined as follows:
\begin{equation}
\label{eqn:boundary_err}
\mu_{b} = \frac{B}{v_{max} - v_{min}} \sqrt{\frac{1}{B+1} \sum_{i=1}^{B+1} [b(i, H^{*}) - b(i, H)]^{2}}.
\end{equation}
where $B$ is the bucket size, $v_{max}$ and $v_{min}$ are maximum and minimum values in relation $R$ respectively, and the function $b(i, H)$ is the $i^th$ value of a given histogram $H$. $\mu_{b}$ is the standard deviation of boundary errors normalized with respect to the mean boundary length $(v_{max} - v_{min}) / B$. The latter error is named as size error ($\mu_{s}$) and formulated as follows:
\begin{equation}
\label{eqn:size_err}
\mu_{s} = \frac{B}{N} \sqrt{\frac{1}{B} \sum_{i=1}^{B} [s(i, H^{*}) - s(i, H)]^{2}}.
\end{equation}
where $N$ is the total number of elements in relation $R$ and function $s(i, H)$ is the size of the $i^th$ bucket of a given histogram $H$. $s(i, H)$ is equal to the mean bucket size $N/B$ for all $i$ values in the range of $1$ to $B$ if the given histogram $H$ is an equi-depth histogram. $\mu_{s}$ is the standard deviation of bucket size errors normalized with respect to the mean bucket size $N/B$.

\subsection{Effect of $T$}
\label{sub:effect_of_t}

\begin{figure}[!t]
\centering
\subfloat[Graph of $\mu_{b}$ against $T$ for real data]{\includegraphics[width=3in]{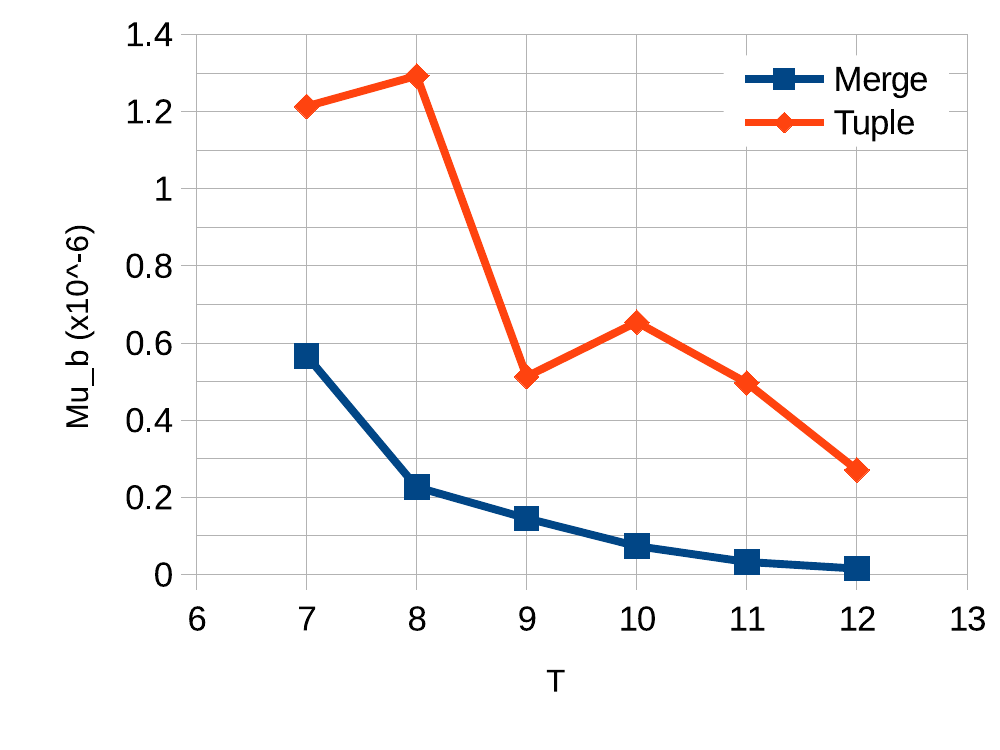}%
\label{fig:t_wiki_be}}
\hfil
\subfloat[Graph of $\mu_{s}$ against $T$ for real data]{\includegraphics[width=3in]{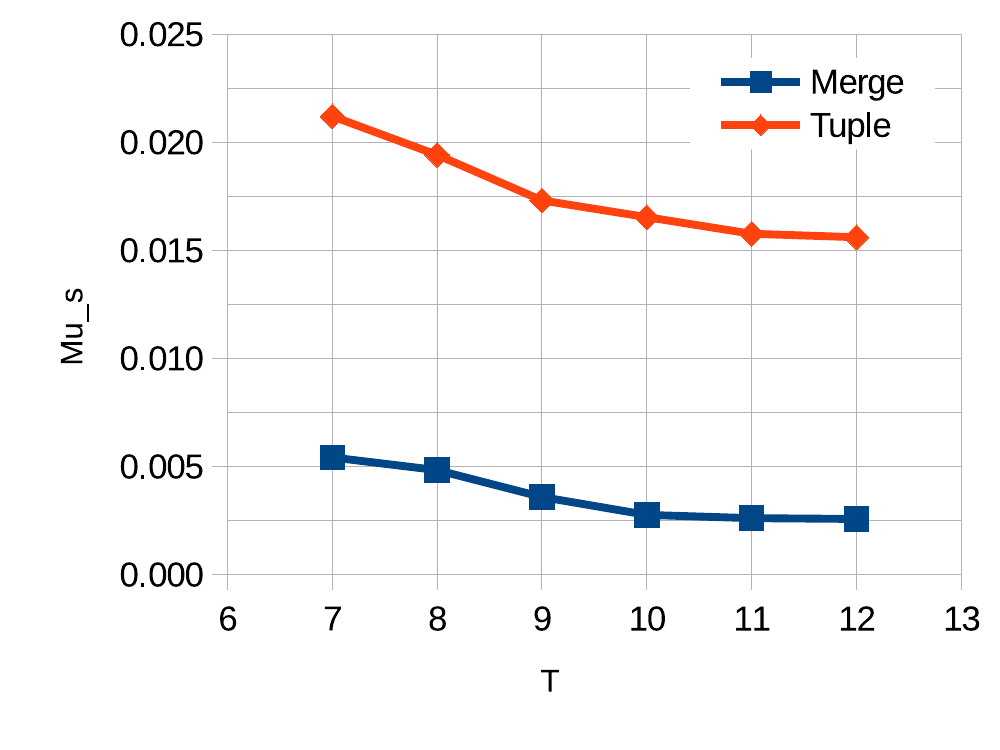}%
\label{fig:t_wiki_bse}}
\caption{Lin-log graphs of error metrics against $T$ ($B\times254\times2^{n}$) which is summary size in $merge$ method and sampling size in $tuple$ for real data}
\label{fig:t_wiki}
\end{figure}

\begin{figure}[!t]
\centering
\subfloat[Graph of $\mu_{b}$ against $T$ for skewed data]{\includegraphics[width=3in]{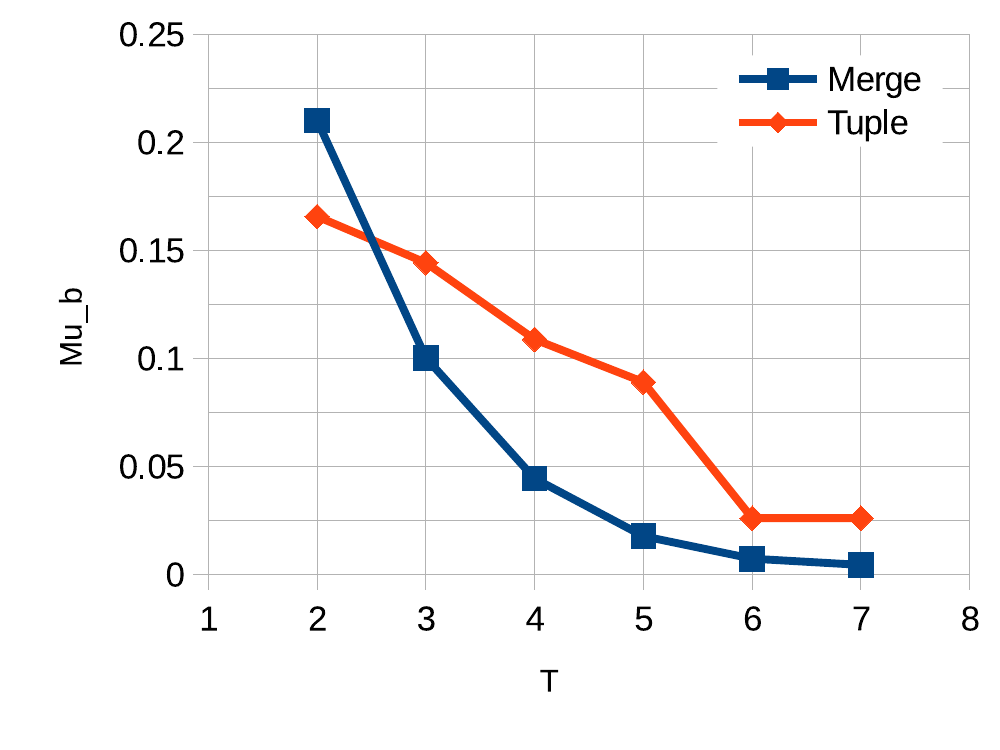}%
\label{fig:t_skewed_be}}
\hfil
\subfloat[Graph of $\mu_{s}$ against $T$ for skewed data]{\includegraphics[width=3in]{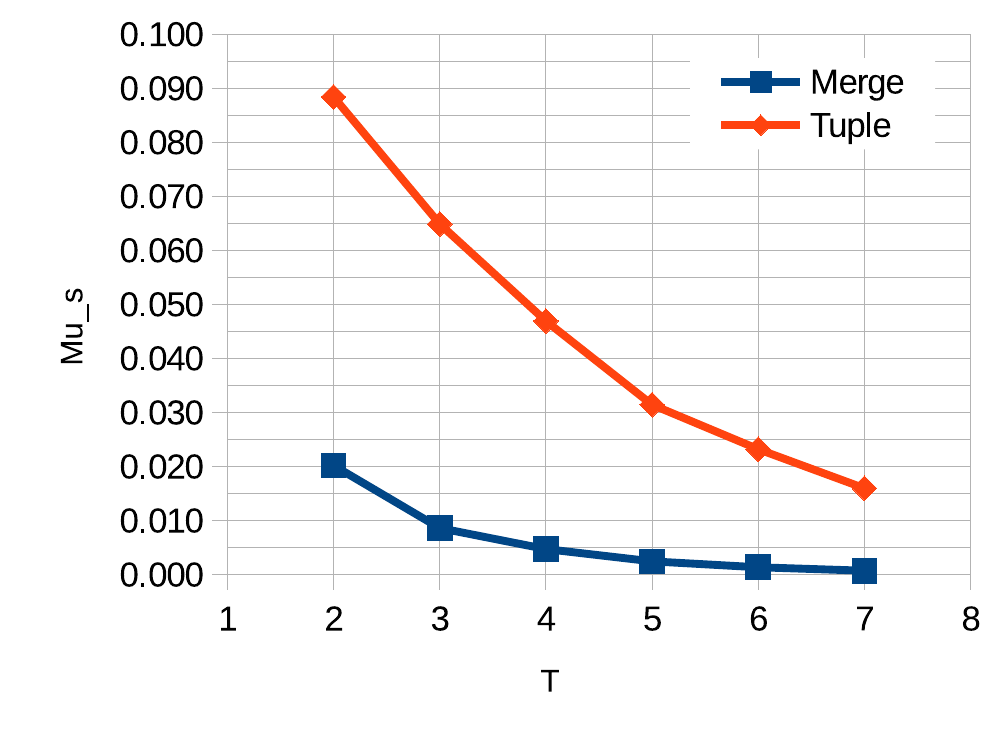}%
\label{fig:t_skewed_bse}}
\caption{Lin-log graphs of error metrics against $T$ ($B\times254\times2^{n}$) which is summary size in $merge$ method and sampling size in $tuple$ for skewed data}
\label{fig:t_skewed}
\end{figure}

Changing $T$ value effects the accuracy of the constructed approximate equi-depth histogram. The first experiment is run to show the effects of $T$ changes on both the proposed method and tuple level sampling. Figures~\ref{fig:t_wiki} and~\ref{fig:t_skewed} show the error graphs of the approximate equi-depth histograms constructed using the proposed method and tuple level sampling method. According to the graph in Figure~\ref{fig:t_wiki_be}, the constructed histogram by using $merge$ method for real data is at least 2 times more accurate in terms of boundary error $\mu_{b}$ than the one constructed using $tuple$ method. Moreover, the $\mu_{b}$ error for $tuple$ method is not consistent because of the randomness and the construction process must be repeated for consistency and this is not convenient because of the run time. For example, let us consider the graph of $\mu_{b}$ against T given in Figure~\ref{fig:t_wiki_be}. Notice that the $\mu_{b}$ error for $tuple$ method is not consistent. The expected result is that $\mu_{b}$ should decrease while $T$ increases. On the other hand, it is clearly seen from the graphs of $merge$ method in Figures~\ref{fig:t_wiki_be},~\ref{fig:t_wiki_bse},~\ref{fig:t_skewed_be}, and~\ref{fig:t_skewed_bse} that $\mu_{b}$ is a non-decreasing function of $T$. The reason of this consistency is the maximum error bound of $merge$ method described in Section~\ref{sec:algo} in detail. The mean running times for all methods are given in Table~\ref{tab:t_times}. Run times for merging daily summaries and samplings are nearly the same. But required time for the summarization stage of $merge$ method is more than the time for offline sampling stage of $tuple$ method. The reason of this time difference is that summarization is exact histogram construction and exact histogram construction for each data partition requires a complete MapReduce Job with a mapper and a reducer and the data comes from each mapper subjected to shuffle and sort phase. On the other hand, tuple level random sampling does not require a reducer because the randomly selected tuples would be stored directly without sorting. Because of this difference, summarization of each day for real data takes approximately 12 minutes while sampling takes 4 minutes. Although this time efficiency of $tuple$ method, all the daily summarizations and samplings are done offline, it makes $merge$ method convenient for real life applications.

\subsection{Effect of given time interval}
\label{sub:effect_of_days}

\begin{figure}[!t]
\centering
\subfloat[Graph of $\mu_{b}$ against merged \# of days for real data]{\includegraphics[width=3in]{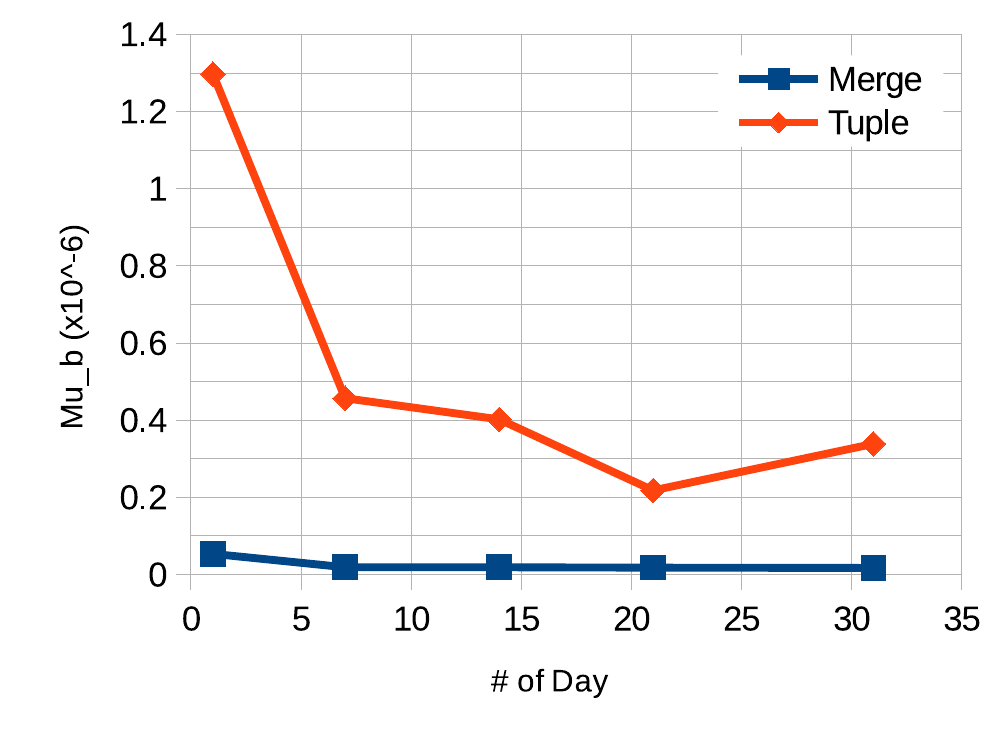}%
\label{fig:daily_wiki_be}}
\hfil
\subfloat[Graph of $\mu_{s}$ against merged \# of days for real data]{\includegraphics[width=3in]{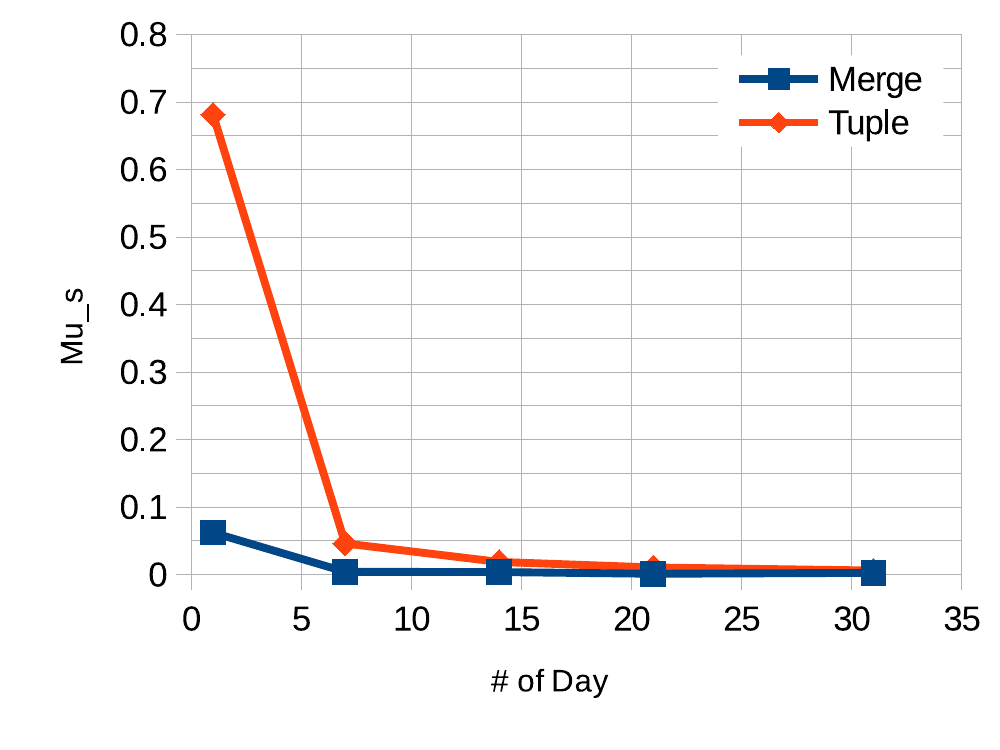}%
\label{fig:daily_wiki_bse}}
\caption{Graphs of error metrics against merged \# of days for real data}
\label{fig:daily_wiki}
\end{figure}

\begin{figure}[!t]
\centering
\subfloat[Graph of $\mu_{b}$ against merged \# of days for skewed data]{\includegraphics[width=3in]{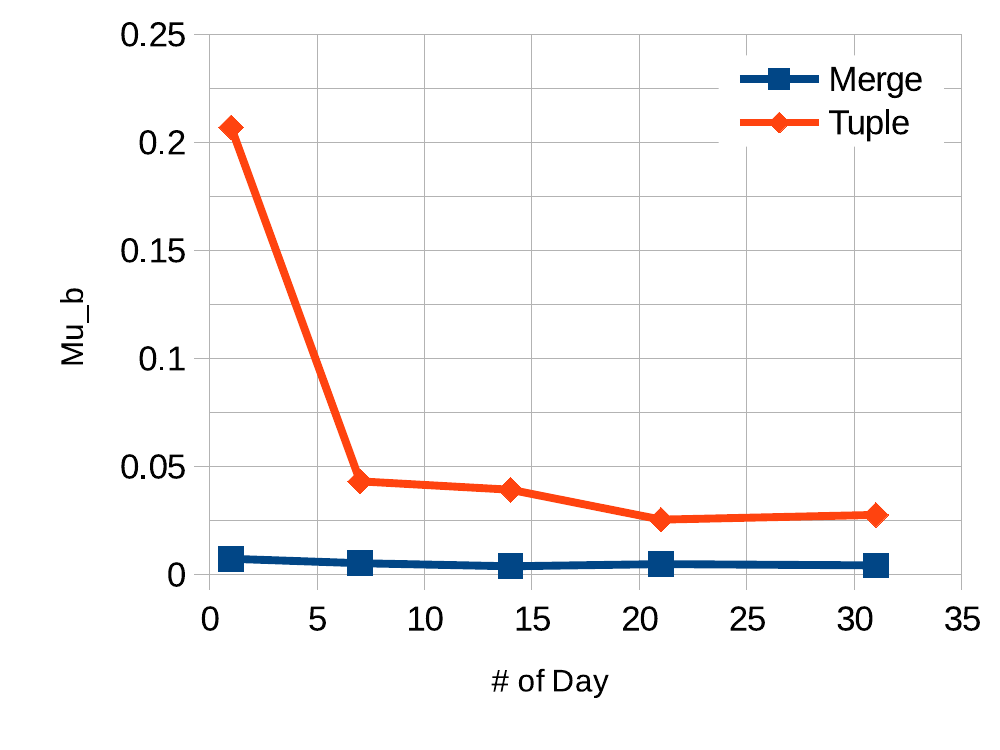}%
\label{fig:daily_skewed_be}}
\hfil
\subfloat[Graph of $\mu_{s}$ against merged \# of days for skewed data]{\includegraphics[width=3in]{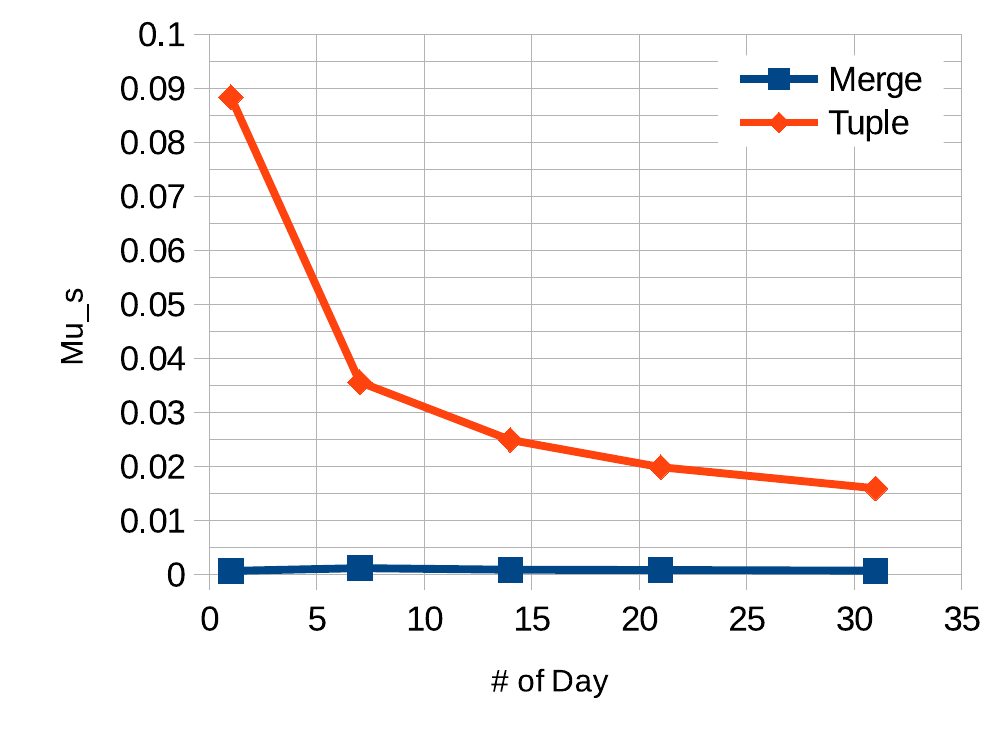}%
\label{fig:daily_skewed_bse}}
\caption{Graphs of error metrics against merged \# of days for skewed data}
\label{fig:daily_skewed}
\end{figure}
Histogram constructions according to the given time interval effect deviation of approximate histograms from the exact ones and running times. We compared $merge$ method with $tuple$ method with an experiment for 5 different time intervals (1 day, 1 week, 2 weeks, 3 weeks, and 1 month). $T$ value is taken to be $B\times254\times2^{12}$ for real data and to be $B\times254\times2^{7}$ for skewed data. All daily exact histogram and samplings are computed offline. In Figures~\ref{fig:daily_wiki} and~\ref{fig:daily_skewed}, graphs of error metrics against time intervals are given and it is clearly seen from the graphs that the proposed method produces more sensible histograms than the ones produced using tuple level random sampling. In particular, a real data histogram constructed using $merge$ method has at least 10 times less boundary error than the one constructed using $tuple$ method in Figure~\ref{fig:daily_wiki_be}. Besides, again the consistency is an issue for $tuple$ method as seen in Figures~\ref{fig:daily_wiki_be} and~\ref{fig:daily_skewed_be}. The graphs of running time against number of days are given in Figure~\ref{fig:daily_times}. More specifically, the compared methods ($merge$ and $tuple$) run in nearly the same time duration except offline parts.

\begin{figure}[!t]
\centering
\subfloat[Running time against merged \# of days for real data]{\includegraphics[width=3in]{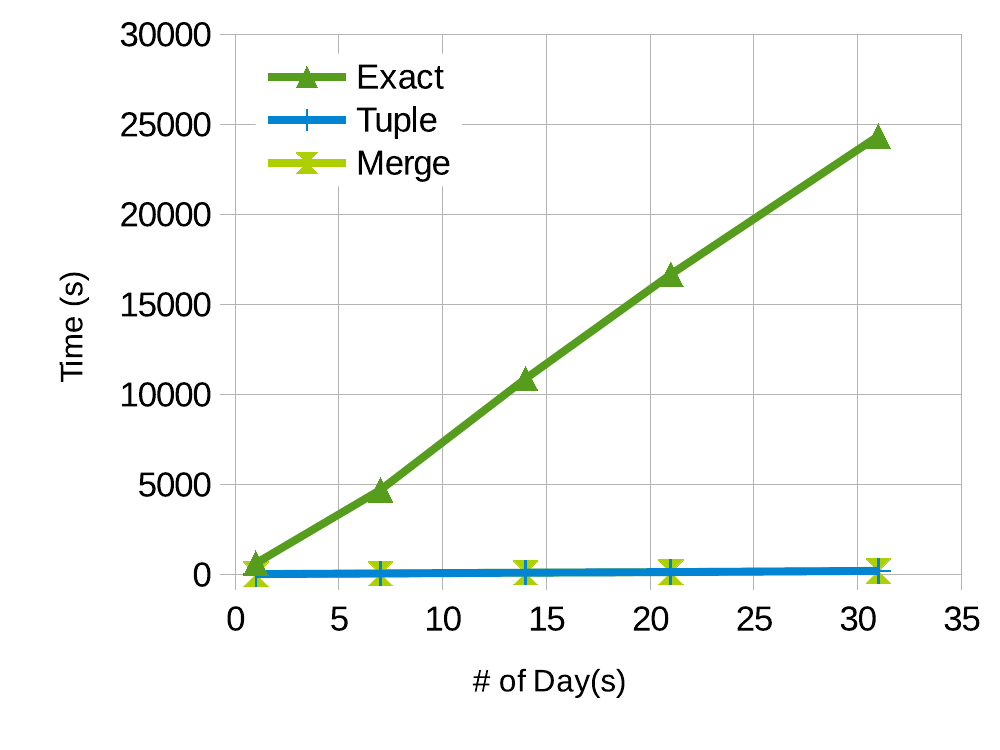}%
\label{fig:daily_wiki_time}}
\hfil
\subfloat[Running time against merged \# of days for skewed data]{\includegraphics[width=3in]{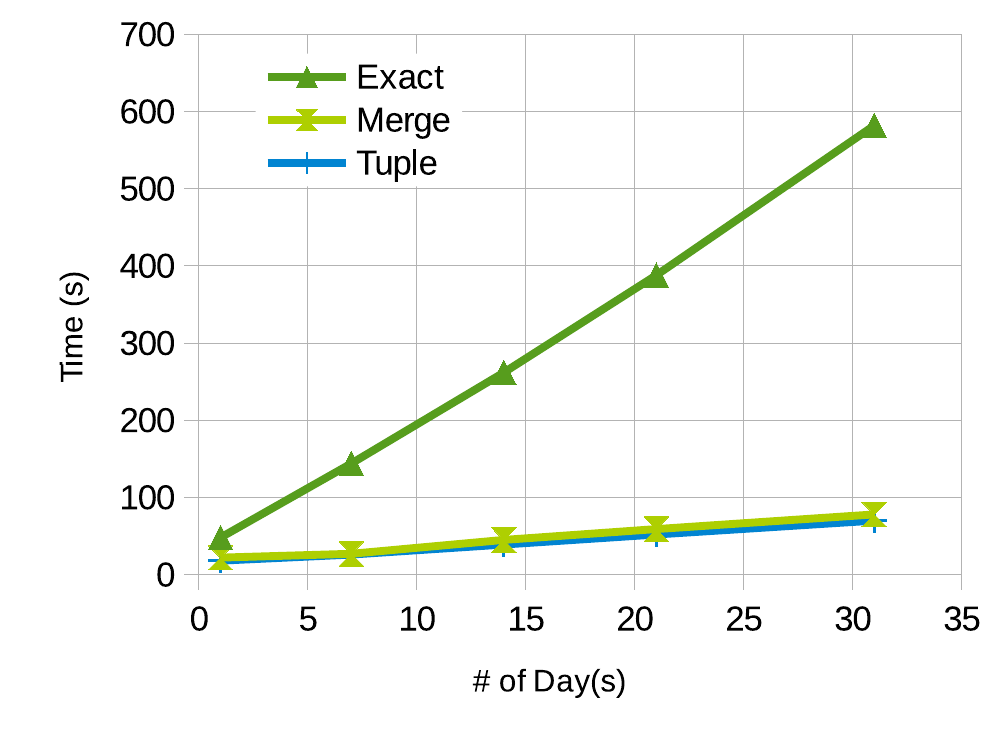}%
\label{fig:daily_skewed_time}}
\caption{Graphs of running time against merged \# of days}
\label{fig:daily_times}
\end{figure}

\begin{table}[!t]
\renewcommand{\arraystretch}{1.3}
\caption{Mean Running Times of Monthly Equi-depth Histogram Construction}
\label{tab:t_times}
\centering
\begin{tabular}{lrr}
\hline
Method                     & Real Data & Skewed Data \\ \hline
Exact Hist. Construction   & 24358     & 582         \\
Tuple with Online Sampling & 6169      & 68          \\
Summarizing for Each Day   & 725       & 18          \\
Merging of Daily Summaries & 117       & 73          \\
Sampling for Each Day      & 223       & 18          \\
Merging of Daily Samplings & 113       & 71          \\ \hline
\end{tabular}
\end{table}

\section{Conclusion}
\label{sec:conc}
In this paper, we proposed a novel approximate equi-depth histogram construction method by merging precomputed exact equi-depth histograms of data partitions. The method is implemented on Hadoop to demonstrate how it is applied to real life problems. The theoretical calculations and the experimental results showed that both the bucket size errors and total size error of any bucket range are bounded by a predefined error set by a user in terms of $T$ and $\beta$. In particular, the experimental results run on both real and synthetic data show that the constructed histograms using the proposed method ($merge$) are more accurate than the tuple level random sampling ($tuple$) with a cost of offline run time. In addition to the proposed merged based histogram construction method, we also proposed a novel histogram processing framework for the daily stored log files. 
This framework is crucial for fast histogram construction over a subset of a list of partitions on demand.
The time complexity and the incrementally updated nature of the proposed method makes it practical to be applied over real life problems.

\ifCLASSOPTIONcompsoc
  \section*{Acknowledgments}
\else
  \section*{Acknowledgment}
\fi

We would like to thank Enver Kayaaslan for his help for presentation.

\ifCLASSOPTIONcaptionsoff
  \newpage
\fi



%
\bibliographystyle{IEEEtran}
\bibliography{paper}

%

\begin{IEEEbiography}[{\includegraphics[width=1in,height=1.25in,clip,keepaspectratio]{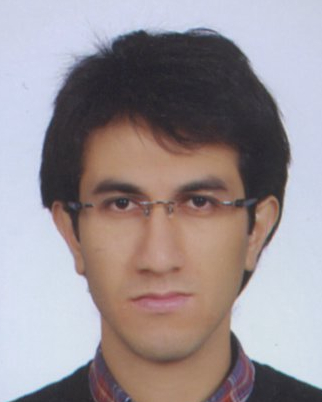}}]{Burak Y{\i}ld{\i}z}
received his BS degree in mechanical engineering and MS degree in computer engineering from TOBB University of Economics and Technology in 2013 and 2016, respectively. Large scale data management and data mining are in his major interest area.
\end{IEEEbiography}

\begin{IEEEbiography}[{\includegraphics[width=1in,height=1.25in,clip,keepaspectratio]{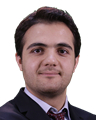}}]{Tolga B\"{u}y\"{u}ktan{\i}r}
got his BS degree in computer engineering from Erciyes University in 2014. He is now a master student at Turgut Ozal University in computer science. His research areas include data mining and IoT.
\end{IEEEbiography}

\begin{IEEEbiography}[{\includegraphics[width=1in,height=1.25in,clip,keepaspectratio]{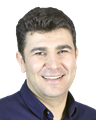}}]{Dr. Fatih Emekci}
got his MS and PhD from the Department of Computer Science at the University of California, Santa Barbara in 2006. He was at Oracle working with query optimizer team and at Linkedin working on data scalability problems until 2012. He has been an associated professor of computer science at Turgut Ozal University, Ankara, Turkey since 2013.
\end{IEEEbiography}



\end{document}